\newtheorem{theorem}{Theorem}
\newtheorem{lemma}{Lemma}
\newtheorem{claim}{Claim}[section]
\newtheorem{observation}{Observation}
\theoremstyle{definition}
\newtheorem{reduction}{Reduction Rule}[section]
\newcommand{\dist}{\textrm{\rm dist}}
\DeclareMathOperator{\operatorClassFPT}{FPT\xspace}
\newcommand{\classFPT}{\ensuremath{\operatorClassFPT}\xspace}
\newlength{\RoundedBoxWidth}
\newsavebox{\GrayRoundedBox}
\newenvironment{GrayBox}[1]%
   {\setlength{\RoundedBoxWidth}{.93\textwidth}
    \def\boxheading{#1}
    \begin{lrbox}{\GrayRoundedBox}
       \begin{minipage}{\RoundedBoxWidth}}%
   {   \end{minipage}
    \end{lrbox}
    \begin{center}
    \begin{tikzpicture}%
       \node(Text)[draw=black!20,fill=white,rounded corners,%
             inner sep=2ex,text width=\RoundedBoxWidth]%
             {\usebox{\GrayRoundedBox}};
        \coordinate(x) at (current bounding box.north west);
        \node [draw=white,rectangle,inner sep=3pt,anchor=north west,fill=white] 
        at ($(x)+(6pt,.75em)$) {\boxheading};
    \end{tikzpicture}
    \end{center}}
\newenvironment{defproblemx}[2][]{\noindent\ignorespaces%
                                \FrameSep=6pt%
                                \parindent=0pt%
                \vspace*{-1.5em}
                \ifthenelse{\isempty{#1}}{%
                  \begin{GrayBox}{\textsc{#2}}%                
                }{%
                  \begin{GrayBox}{\textsc{#2} parameterized by~{#1}}%  
                }
                \begin{tabular*}{\textwidth}{@{\hspace{.1em}} >{\itshape} p{1.8cm} p{0.8\textwidth} @{}}%        
            }{
                \end{tabular*}%
                \end{GrayBox}%
                \ignorespacesafterend
            }
\newcommand{\Oh}{\mathcal{O}}
\newcommand{\pname}{\textsc}
\newcommand{\ProblemFormat}[1]{\pname{#1}}
\newcommand{\ProblemIndex}[1]{\index{problem!\ProblemFormat{#1}}}
\newcommand{\ProblemName}[1]{\ProblemFormat{#1}\ProblemIndex{#1}{}\xspace}
\newcommand{\probTB}{\ProblemName{Telephone Broadcast}} %Problem
\begin{document}

\title{Parameterized Complexity of Broadcasting in Graphs\thanks{The research leading to these results has received funding from the Research Council of Norway via the project  BWCA (grant no. 314528).}
}

\author{
 Fedor V. Fomin\thanks{Department of Informatics, University of Bergen, Norway}
    \and
    Pierre Fraigniaud\thanks{Institut de Recherche en Informatique Fondamentale, Universit\'e Paris Cité and CNRS, France. Additional support from the ANR project DUCAT. }\addtocounter{footnote}{-2}   
     \and
    Petr A. Golovach\footnotemark{}
}

\date{}

\maketitle

\begin{abstract}
The task of the broadcast problem is, given a graph~$G$ and a source vertex~$s$, to compute the minimum number of rounds required to disseminate a piece of information from~$s$ to all vertices in the graph. It is assumed that, at each round, an informed vertex can transmit the information to at most one of its neighbors. The broadcast problem is known to NP-hard.  We show that the problem is FPT when parametrized by the size~$k$ of a feedback edge-set, or by the size~$k$ of a vertex-cover, or by $k=n-t$, where $t$ is the input deadline for the broadcast protocol to complete.  
\end{abstract}

\section{Introduction}\label{sec:intro}

The aim of \emph{broadcasting} in a network is to transmit a message from a given source node of the network to all the other nodes. Let $G=(V,E)$ be a connected simple graph modeling the network, and let $s\in V$ be the source of a message~$M$. The standard \emph{telephone model}~\cite{HedetniemiHL88} assumes that the communication proceeds in synchronous rounds. At any given round, any node $u\in V$ aware of~$M$ can forward~$M$ to at most one neighbor $v$ of~$u$.  The minimum number of rounds for broadcasting a message from $s$ in~$G$ to all other vertices is denoted by $b(G,s)$, and we let $b(G)=\max_{s\in V}b(G,s)$ be the broadcast time of graph~$G$. As the number of informed nodes (i.e., nodes aware of the message) can at most double at each round, $b(G,s)\geq \lceil \log_2n\rceil$ in $n$-node networks. On the other hand, since $G$ is connected, at least one uninformed node receives the message at any given round, and thus $b(G)\leq n-1$. Both bounds are tight, as witnessed by the complete graph~$K_n$ and the path~$P_n$, respectively. The problem of computing the broadcast time $b(G,s)$ for a given graph~$G$ and a given source~$s\in V$ is NP-hard~\cite{SlaterCH81}. Also, the results of~\cite{PapadimitriouY82} imply that  it is NP-complete to decide whether $b(G,s)\leq t$ for graphs with $n=2^t$ vertices.

Three lines of research have emerged since the early days of studying broadcasting in the telephone model. One line is devoted to determining the broadcast time of specific classes of graphs judged important for their desirable properties as interconnection networks (e.g., hypercubes, de Bruijn graphs, Cube Connected Cycles, etc.). We refer to the surveys~\cite{FraigniaudL94,Hromkovic05} for this matter. Another line of research takes inspiration from extremal graph theory. It aims at constructing $n$-node graphs~$G$ with optimal broadcast time~$b(G)=\lceil\log_2n\rceil$ and minimizing the number of edges sufficient to guarantee this property. Let $B(n)$ be the minimum number of edges of $n$-node graphs with broadcast time $\lceil\log_2n\rceil$. It is known~\cite{GrigniP91} that $B(n)=\Theta(n\,L(n))$ where $L(n)$ denotes the number of consecutive leading 1s in the binary representation of~$n-1$. On the other hand, it is still not known whether $B(\cdot)$ is non-decreasing for  $2^t\leq n < 2^{t+1}$, for every $t\in \mathbb{N}$. We are interested in a third, more recent line of research, namely the design of algorithms computing efficient broadcast protocols. Note that a protocol for broadcasting from a source~$s$ in a graph~$G$ can merely be represented as a spanning tree~$T$ rooted at~$s$, with an ordering of all the children of each node in the tree. 

Polynomial-time algorithms are known for trees~\cite{SlaterCH81} and some classes of tree-like graphs~\cite{CevnikZ17,GholamiHM23,HarutyunyanM08}.
 Several (polynomial-time) approximation algorithms have been designed for the broadcast problem. In particular, the algorithm in~\cite{KortsarzP95} computes, for every graph~$G$ and every source~$s$, a broadcast protocol from~$s$ performing in $2\,b(G,s)+O(\sqrt{n})$ rounds, hence this algorithm has approximation ratio $2+o(1)$ for graphs with broadcast time $\gg \sqrt{n}$, but $\tilde{\Theta}(\sqrt{n})$ in general. Later, a series of papers tighten the approximation ratio, from $O(\log^2n/\log\log n)$~\cite{Ravi94}, to $O(\log n)$~\cite{Bar-NoyGNS00}, and eventually $O(\log n/\log\log n)$~\cite{ElkinK06}, which is, up to our knowledge the current best approximation ratio for the broadcast problem. Better approximation ratios can be obtained for specific classes of graphs~\cite{BhabakH15,BhabakH19,HarutyunyanH23}.

Despite all the achievements obtained on the broadcast problem, it has not yet been approached from the parameterized complexity viewpoint~\cite{CyganFKLMPPS15}. There might be a good reason for that. Since at most $2^t$ vertices can have received the message after $t$ communication rounds, an instance of the broadcast problem with time-bound $t$ in an $n$-vertex graph is a no-instance whenever $n>2^t$. It follows that the broadcast problem has a trivial kernel when parameterized by the broadcast time. This makes the natural parameterization by the broadcast time not very significant. Nevertheless, as we shall show in this paper, there is a  parameterization below the natural upper bound for the number of rounds that leads to interesting conclusions.   

\subsection{Our Results}

Let \probTB be the following problem: given a connected graph $G=(V,E)$, a source vertex $s\in V$, and a nonnegative integer~$t$, decide whether there is a broadcast protocol from~$s$ in~$G$ that ensures that all the vertices of~$G$ get the message in at most~$t$ rounds.  We first show that \probTB can be solved in a single-exponential time by an exact algorithm. 

\begin{theorem}\label{thm:exact}
\probTB can be solved in $3^n\cdot n^{\Oh(1)}$ time for $n$-vertex graphs.
\end{theorem}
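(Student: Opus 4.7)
The plan is to design a dynamic programming algorithm over subsets of $V$ in the spirit of classical Steiner-tree/Held-Karp DPs, exploiting the fact that an optimal broadcast protocol from $s$ is described by a rooted ordered spanning tree of $G$. The running time bound $3^n\cdot n^{\Oh(1)}$ will follow from the standard identity $\sum_{k=0}^{n}\binom{n}{k}2^k=3^n$.

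The table I will compute is $f(v,S)$, defined for every $v\in V$ and every $S\subseteq V$ with $v\in S$, as the minimum number of rounds needed for $v$ to broadcast to all vertices of $S$ using only edges of $G[S]$; if no such broadcast is possible we set $f(v,S)=\infty$. The base case is $f(v,\{v\})=0$. The recursive step peels off the \emph{first} child that $v$ informs in an optimal protocol: if that child is a neighbor $u\in N(v)\cap S$ whose subtree covers $A\subseteq S\setminus\{v\}$ with $u\in A$, then in round $1$ the edge $vu$ is used, after which $u$ needs $f(u,A)$ more rounds while $v$, starting in round $2$, needs $f(v,S\setminus A)$ further rounds to serve its remaining children. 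This yields
\[
f(v,S)=1+\min_{\substack{A\subseteq S\setminus\{v\}\\ u\in N(v)\cap A}}\max\bigl(f(u,A),\,f(v,S\setminus A)\bigr).
\]
I will verify this recursion by the characterization of broadcast time of an ordered rooted tree as $\max_i(i+b_i)$, where $b_1\ge b_2\ge\cdots\ge b_d$ are the subtree broadcast times of the children in the order they are informed; peeling the first child exactly corresponds to one step of this maximum. The answer to \probTB is then $\bigl[f(s,V)\le t\bigr]$.

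Correctness will proceed in two directions: any broadcast protocol from $v$ in $G[S]$ decomposes via its first used edge into a pair $(A,S\setminus A)$ satisfying the right-hand side, so $f(v,S)$ upper-bounds the optimum; and conversely, any choice of $(u,A)$ achieving the right-hand side gives an explicit protocol with that running time. Both directions are immediate once the ordered-tree characterization is in place, and the connectivity of $G[A]$ (respectively $G[S\setminus A]$) is enforced automatically through the value $\infty$.

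For the running time, I will process the table in order of increasing $|S|$. The total work is bounded, up to a factor $n$ for choosing $u\in N(v)\cap A$, by
\[
\sum_{v\in V}\sum_{S\ni v}\bigl|\{A:A\subseteq S\setminus\{v\}\}\bigr|
=\sum_{v\in V}\sum_{T\subseteq V\setminus\{v\}}2^{|T|}
=n\cdot 3^{n-1},
\]
so the overall complexity is $\Oh(n^2\cdot 3^n)=3^n\cdot n^{\Oh(1)}$, as required. The main conceptual step is the ``first-child'' decomposition that lets the DP range over a single pair $(A,S\setminus A)$ and thereby enjoy the $3^n$ subset-of-subset bound; everything else is routine.
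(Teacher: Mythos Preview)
Your proposal is correct and gives the claimed bound, but it proceeds by a genuinely different dynamic program than the one in the paper. The paper does a \emph{round-by-round} DP: for each $i\in\{0,\dots,t\}$ it maintains the family $\mathcal{L}_i$ of all sets $X\ni s$ with $G[X]$ connected and $b(G[X],s)\le i$, and the transition from $\mathcal{L}_{i-1}$ to $\mathcal{L}_i$ is governed by the existence of a matching in the bipartite graph between $Y$ and $X\setminus Y$ saturating $X\setminus Y$; the $3^n$ factor arises from enumerating ordered pairs $Y\subseteq X$. Your DP instead fixes the root and decomposes \emph{structurally} by peeling off the subtree of the first child, computing $f(v,S)$ for all pairs $(v,S)$ via the Held--Karp-style split $S\setminus\{v\}=A\uplus(S\setminus A\setminus\{v\})$; here the $3^n$ factor comes from the subset-of-subset enumeration. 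Both are valid; your version is arguably more elementary (no matching subroutine) and as a byproduct yields $b(G[S],v)$ for all $(v,S)$, whereas the paper's version makes the ``at most one new recipient per sender per round'' constraint explicit through the matching characterization and fits the decision formulation $b(G,s)\le t$ more directly.
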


Motivated by the fact that the complexity of \probTB  remains open in pretty simple tree-like graphs  (e.g., cactus graphs, and therefore outerplanar graphs), we first consider the \emph{cyclomatic}  number as a parameter, i.e., the minimum size of a feedback edge set, that is, the size of the smallest set of edges whose deletion results to an acyclic graph. We show that \probTB is \classFPT when parameterized by this parameter. 

\begin{theorem}\label{thm:cyc}
\probTB can be solved in $2^{\Oh(k\log k)}\cdot n^{\Oh(1)}$ time for $n$-vertex graphs with cyclomatic number at most~$k$. 
\end{theorem}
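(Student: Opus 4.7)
The plan is to exploit the sparsity implied by cyclomatic number at most $k$ to bound the structural complexity of candidate broadcast trees, enumerate a limited family of ``core structures,'' and solve a polynomial-time subproblem for each.

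First, I would compute the $2$-core $G_0$ of $G$ (by iteratively deleting degree-one vertices) and form the multigraph $H$ obtained from $G_0$ by suppressing every internal degree-two vertex, inserting $s$ into $V(H)$ if necessary (subdividing an $H$-edge when $s$ lies in its interior). Since $|E(G)|-|V(G)|+1=k$, a handshake argument gives $|V(H)|=\Oh(k)$ and $|E(H)|=\Oh(k)$. Each edge of $H$ corresponds to a maximal path in $G_0$ connecting two ``branching'' vertices, and $G\setminus G_0$ consists of pendant trees, each attached at a single vertex of $G_0$ (either a branching vertex or an internal vertex of one of the paths).

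Second, I would observe that a broadcast tree $T$ of $G$ rooted at $s$ is determined by (a) a spanning tree $T_H$ of the multigraph $H$, indicating which $H$-edges have their entire underlying path present in $T$; (b) for every remaining ``cut'' $H$-edge, the single edge of the underlying path that is removed from $T$; and (c) the ordering of children at each vertex of $T$. The number of spanning trees of $H$ is at most $|E(H)|^{|V(H)|-1}=2^{\Oh(k\log k)}$. For a fixed rooted $T_H$, the ``core children'' of a core vertex $v$ (its $T_H$-children together with, for each incident cut $H$-edge, the first vertex of the sub-path rooted at $v$) are at most $\deg_H(v)$ in number, so the total count of orderings of core children across all core vertices is $\prod_v (\deg_H v)!\le k^{\Oh(k)}=2^{\Oh(k\log k)}$, using $\sum_v \deg_H v=\Oh(k)$. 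I enumerate over all $2^{\Oh(k\log k)}$ such pairs (spanning tree of $H$, orderings of core children).

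Third, for each fixed structural choice I determine the remaining degrees of freedom in polynomial time via a bottom-up dynamic programming on $T_H$. For every pendant tree I precompute its broadcast time using the classical polynomial algorithm of Slater, Cockayne, and Hedetniemi~\cite{SlaterCH81}. For each path forming an $H$-edge I run a further one-dimensional DP along the path that handles pendant trees hanging off its internal vertices and (in the cut case) identifies the optimal cut position as a function of the two endpoint transmission times. At every core vertex, pendant-tree children are interleaved with the already-ordered core children by the standard greedy exchange rule (non-increasing required completion time). Propagating these broadcast-time values up $T_H$ and comparing the root value with $t$ decides the current structural choice; the overall running time is $2^{\Oh(k\log k)}\cdot n^{\Oh(1)}$.

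The main technical hurdle lies in the cut $H$-edges: the underlying path is informed simultaneously from both endpoints, so its completion time is a joint function of DP values at two potentially distant core vertices, and the ``required time'' of a cut sub-path depends on the cut position, which in turn depends on the position that the sub-path occupies in the orderings at its endpoints. Enumerating the core-children orderings decouples this circular dependency, leaving a clean two-sided balancing problem per cut edge that admits a one-dimensional search; and the interleaving of pendants with precedence-constrained core children reduces to a standard list-scheduling subproblem solvable in polynomial time.
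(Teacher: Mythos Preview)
Your overall strategy matches the paper's: build a bounded-size ``core'' (the paper uses the endpoints of a feedback edge set together with the degree-$\geq 3$ vertices of the resulting spanning tree, which is essentially your $V(H)$), enumerate $2^{\Oh(k\log k)}$ candidate skeletons and orderings of core children, and verify each choice in polynomial time via path and tree subroutines. The paper's Lemmas~\ref{lem:bxy} and~\ref{lem:dxy} are precisely your ``one-dimensional DP along a path'' and your ``two-sided balancing.''

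The gap is in how you interleave pendant trees with the already-ordered core children. Your rule (``non-increasing required completion time'') is a bottom-up rule: it needs the required time $\tau$ of every child of $v$, including each cut sub-path hanging at $v$. But a cut sub-path's $\tau$ depends on the cut position, which in turn depends on the start times at both endpoints, which depend on the \emph{absolute} slot the sub-path occupies in $C(v)$---and in your scheme that slot is determined by the interleaving, hence by $\tau$. Fixing only the \emph{relative} order among core children does not break this cycle, because the number of pendants inserted before a given core child still varies with that child's own $\tau$. The paper escapes the circularity by processing core vertices top-down from $s$, tracking the earliest round $r(v)$ at which each core vertex is reached, and using a \emph{deadline-aware} greedy at every $v$: each pendant child $u_i$ is placed at the \emph{latest} index $h$ with $r(v)+h+b(T_i,u_i)\le t$, and the core children then fill the remaining slots in their enumerated order. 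Crucially, this determines the absolute slot of every core child from the pendants' broadcast times and the global deadline $t$ alone, independently of any cut-edge data; once $r(\cdot)$ and these slots are known at both endpoints of a cut edge, the remaining two-sided problem is the clean one-dimensional check of Lemma~\ref{lem:dxy}. Your plan goes through if you replace the bottom-up $\tau$-greedy interleaving by this top-down deadline-aware insertion.
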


As far as we know, no NP-hardness result is known on graphs of treewidth at most~$k\geq 2$. While we did not progress in that direction, we  provide an interesting result for a stronger parameterization, namely the \emph{vertex cover} number of a graph. (Note that, for all graphs, the treewidth never exceeds the vertex cover number.)  As for the cyclomatic number, we do not only show that, for every fixed~$k$, the broadcast time can be found in polynomial time on graphs with vertex cover at most~$k$, but we prove a stronger result: the problem is FPT. 

\begin{theorem}\label{thm:vc}
\probTB can be solved in $2^{\Oh(k2^k)}\cdot n^{\Oh(1)}$ time for $n$-vertex graphs with a vertex cover of size at most $k$. 
\end{theorem}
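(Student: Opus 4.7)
The plan is to reduce the problem, after enumerating a small ``template'' of the broadcast tree, to a polynomial-time transportation feasibility test. First compute a vertex cover $C$ of size at most $k$ in $1.2738^k\cdot n^{\Oh(1)}$ time, and partition $I=V(G)\setminus C$ into at most $2^k$ \emph{types} $T_1,\dots,T_m$, where two $I$-vertices have the same type iff they share the same neighborhood $N_i\subseteq C$. In any broadcast tree $T$ rooted at~$s$, every $I$-vertex can only have $C$-children since $N(I)\subseteq C$; call such a vertex \emph{internal} if it has at least one child in~$T$, and a \emph{leaf} otherwise. Define the \emph{template} $T^*\subseteq T$ as the subtree spanned by $s$, all $C$-vertices in $T$, and all internal $I$-vertices. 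Since each internal $I$-vertex has at least one $C$-child and distinct internal $I$-vertices have disjoint child sets, there are at most $k$ internal $I$-vertices, and therefore $|T^*|\le 2k+1$.

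The algorithm enumerates every template, each specified by a rooted shape on at most $2k+1$ nodes, a labeling of nodes (the identity of a $C$-vertex, the type of an internal $I$-vertex, or a source marker), and an ordering of the $T^*$-children at each node; this yields $2^{\Oh(k^2\log k)}\subseteq 2^{\Oh(k\cdot 2^k)}$ templates. For each template, the plan is to show that at every $C$-vertex $u\in T^*$ one may canonically inform the $T^*$-children of $u$ first (in the template's order) and only then its leaf children, because swapping any leaf before a $T^*$-child merely pushes that child's whole subtree one round later. Under this canonical schedule the informing round is $r_u = r_{\mathrm{parent}(u)}+j$, with $j$ the rank of $u$ among its parent's $T^*$-children, propagated top-down from $r_s=0$. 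For each $u\in C\cap T^*$ and each type $T_i$ with $u\in N_i$, let $\ell_{u,i}\in\mathbb{Z}_{\ge 0}$ be the number of leaf $T_i$-children assigned to~$u$, and set $L_u=\sum_i \ell_{u,i}$ and $B_u=t-r_u-q_u$, where $q_u$ is the number of $T^*$-children of~$u$. A short induction on~$T^*$ (the $j$-th $T^*$-child of~$u$ inherits deadline $t-r_u-j=t-r_{c}$, which telescopes) shows that the broadcast finishes within $t$ rounds iff $L_u\le B_u$ for every $u\in T^*$. Combined with the demand equations $\sum_{u\in N_i\cap C}\ell_{u,i}=n_i-o_i$, where $o_i$ is the number of type-$T_i$ vertices already placed as internal nodes of~$T^*$, this is a bipartite transportation feasibility problem with at most $k$ supplies and at most $2^k$ demands, solvable in polynomial time by a single max-flow computation.

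The main obstacle will be the two structural steps -- the bound $|T^*|\le 2k+1$ (the case $s\in I$ is handled by viewing $s$ itself as an internal $I$-node) and the canonical-schedule/telescoping identity that collapses the scheduling question to the uniform inequality $L_u\le B_u$ -- together with some minor book-keeping: internal $I$-vertices contribute $L_u=0$ since they only have $C$-children, and any template whose adjacencies are inconsistent with~$G$ is discarded. Once this is in place, the algorithm iterates over all $2^{\Oh(k\cdot 2^k)}$ templates, computes $\{r_u\}$ and $\{B_u\}$ top-down, and decides the transportation, returning \true iff some template succeeds, for a total running time of $2^{\Oh(k\cdot 2^k)}\cdot n^{\Oh(1)}$.
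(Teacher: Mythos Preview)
Your approach is correct and takes a genuinely different route from the paper's proof. The paper proceeds in two phases: it first proves a structural lemma (Lemma~\ref{lem:vc}) showing that in some optimal protocol every vertex of the vertex cover~$S$ is informed within $2k-1$ rounds, then guesses (by twin-class multiplicities) the at most $2k^2$ independent-set vertices informed during the first $2k$ rounds, verifies this guess on the induced subgraph using the exact $3^n\cdot n^{\Oh(1)}$ algorithm of Theorem~\ref{thm:exact}, and finally models the remaining rounds as an integer linear program with $\Oh(k2^k)$ variables solved via Lenstra's algorithm (Lemma~\ref{lem:ILP}). By contrast, you enumerate directly the skeleton~$T^*$ of the broadcast tree (the subtree spanned by $s$, the cover vertices, and the at most~$k$ internal $I$-vertices), observe that leaves can always be scheduled after $T^*$-children without increasing the completion time, and reduce the residual leaf-assignment to a polynomial-time bipartite transportation problem.

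Your argument is more elementary in that it needs neither the exact exponential algorithm nor Lenstra's ILP machinery, and it dispenses with the separate structural Lemma~\ref{lem:vc}. It also yields a sharper running time: your template enumeration is bounded by $2^{\Oh(k^2)}$ (your stated $2^{\Oh(k^2\log k)}$ is already conservative), and each template is processed by a max-flow, so you actually obtain $2^{\Oh(k^2)}\cdot n^{\Oh(1)}$, strictly better than the $2^{\Oh(k2^k)}\cdot n^{\Oh(1)}$ the paper achieves through the ILP step. The paper's approach, on the other hand, isolates a clean stand-alone fact about how quickly the cover can be informed, which may be of independent interest. Two small points worth making explicit when you write this out in full: (i) the bound on internal $I$-vertices is at most $|C|\le k$ even when $s\in I$ (since $s$ then has a $C$-child and is counted among them), so $|T^*|\le 2k$ rather than $2k+1$; and (ii) the inequality $L_u\le B_u$ with $L_u\ge 0$ automatically enforces $r_u+q_u\le t$ at childless template nodes, so no separate feasibility check is needed there.
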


Finally, we focus on graphs with very large broadcast time, for which the algorithm in~\cite{KortsarzP95} provides hope to derive very efficient broadcast protocols as this algorithm constructs a broadcast protocol performing in $2\,b(G)+O(\sqrt{n})$ rounds. While we were not able to address the problem over the whole range $\sqrt{n}\ll t \leq n-1$, we were able to provide answers for the range $n-O(1)\leq t \leq n-1$. More specifically, we consider the parameter $k=n-t$, and study the kernelization for the problem under such parametrization. 

\begin{theorem}\label{thm:kernel}
\probTB  admits a kernel with $\Oh(k)$ vertices in $n$-vertex graphs when parameterized by $k=n-t$. 
\end{theorem}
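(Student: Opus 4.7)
The plan is to define polynomial-time reduction rules that shrink $n$ while preserving both the YES/NO answer and the parameter $k=n-t$, and then argue that every irreducible instance has $\cO(k)$ vertices. The starting point is the arithmetic identity $\sum_{i=1}^{t}p_i=n-1$, where $p_i$ is the number of vertices informed in round~$i$ of a schedule meeting the deadline~$t$; rewriting, $\sum_i(p_i-1)=k-1$. Hence the broadcast is essentially a linear chain of single informings, with at most $k-1$ additional parallel informings sprinkled in, so the broadcast tree is structurally close to a Hamiltonian path of~$G$, and the graph itself should decompose into a small number of branch vertices connected by long degree-$2$ paths and decorated with a few twin pendants.

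I would introduce two reduction rules. \emph{Rule~1 (path shortening):} if $G$ contains an induced path $u_0 u_1\cdots u_\ell$ of length $\ell\geq 3$ with each internal vertex $u_i$ of degree~$2$ in $G$ and distinct from $s$, delete one such internal vertex (adding a shortcut edge if needed) and decrement $t$ by~$1$. \emph{Rule~2 (pendant pruning):} if a vertex $v$ has two pendant neighbours, delete one of them and decrement $t$ by~$1$. Both rules preserve $k=n-t$ and remove exactly one vertex. Their correctness rests on two structural facts: (i) a long degree-$2$ path must be informed by at most two linear waves starting at its endpoints, so deleting an internal vertex advances the waves' meeting point by exactly one round; and (ii) the pendants of $v$ form a twin class, and some optimal schedule places them as a terminal block in $v$'s sending order, which is shortened by exactly one round upon removing a single pendant.

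After exhausting both rules, every maximal induced degree-$2$ path has length at most~$2$ and each vertex has at most one pendant neighbour, so a routine counting gives $|V(G')|=\cO(|V_{\geq 3}|)$, where $V_{\geq 3}$ denotes the vertices of degree at least~$3$ in the reduced graph. A potential-function argument relating degree-$\geq 3$ vertices to the ``extra informings'' $\sum_i(p_i-1)=k-1$ of any optimal schedule then gives $|V_{\geq 3}|=\cO(k)$, yielding $|V(G')|=\cO(k)$ as required.

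The main obstacle will be the soundness of Rule~2: when the critical path in $v$'s subtree runs through a non-pendant child rather than through the pendants, the decrement $t\mapsto t-1$ may not be matched by the local saving from removing a pendant, so the reduced instance could become NO while the original is YES. I expect to resolve this by refining Rule~2's triggering condition (for example, applying it only when the number of pendants at $v$ exceeds a threshold depending on the local subtree times), together with a complementary rule for the easy regime. Establishing the ``each degree-$\geq 3$ vertex is responsible for a share of the surplus $k-1$'' claim for the final counting step is likely to require a similar careful case analysis on the optimal broadcast tree.
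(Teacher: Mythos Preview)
Your Rule~1 is unsafe exactly for the reason you identified in Rule~2, and the cycle already shows it: take $G=C_6$ with source $s=v_1$ and $t=3$, so $k=3$. Then $b(C_6,s)=3$, a YES instance. Your rule deletes an internal degree-$2$ vertex and decrements $t$, producing $(C_5,s,2)$ with the same $k=3$; but $b(C_5,s)=3>2$, so the reduced instance is NO. The flaw in your justification is that ``advancing the waves' meeting point by one round'' is irrelevant when the last-informed vertex lies elsewhere on the path (here, the slower wave still needs three rounds to reach $v_5$). In general, contracting a degree-$2$ vertex gives $b(G',s)\in\{b(G,s)-1,b(G,s)\}$, and you get the bad value precisely when the contracted segment is not on the critical bottleneck. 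The paper's fix is not a local refinement of the trigger but a \emph{global} one: its bridge-contraction rule only fires when $|V(G_1)|<\dist_{G_1}(s,u)+\rho_{G_2}(v)$, which forces the bottleneck through the contracted edge; likewise its pendant rule requires $|W|\ge |V(G)\setminus W|$, so the pendants themselves are the bottleneck. You will need conditions of this global flavour for both rules.

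Even granting sound rules, your counting step is a hope rather than an argument: bounding $|V_{\ge 3}(G')|$ by $\cO(k)$ from $\sum_i(p_i-1)=k-1$ would require tying every high-degree vertex of the \emph{graph} to a branching event in some optimal \emph{broadcast tree}, which is not automatic (extra edges of $G$ need not appear in the tree at all). The paper sidesteps this entirely: instead of reducing down to a graph whose structure it then counts, it adds two ``certification'' rules that run BFS from $s$, test whether the BFS tree already broadcasts in $n-k$ rounds, and detect a second YES pattern via a level-neighbourhood condition; the linear bound $|V(G)|\le 18k-12$ is then proved directly by analysing the BFS tree of any instance surviving all rules (Claims on $|Y|$, $|V(P)|$, $|V(B_v)|$). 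If you want to pursue your route, you will need both the global triggers above and a genuine replacement for the BFS-based analysis.
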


As a direct consequence of Theorem~\ref{thm:kernel}, \probTB  is FPT for the parameterization by $k=n-t$. Specifically the problem can be solved in $2^{O(k)}\cdot n^{O(1)}$ time.

\subsection{Related Work}

A classical generalization of the broadcast problem is the \emph{multicast} problem, in which the message should only reach a given subset of target vertices in the input graph. Many of the previously mentioned approximation algorithms for the broadcast problem extend to the multicast problem, and, in particular, the algorithm in~\cite{ElkinK06} is an $O(\log k/\log\log k)$-approximation algorithm for the multicast problem with $k$ target nodes. 

Many variants of the telephone model have been considered in the literature, motivated by different network technologies. One typical example is the \emph{line} model~\cite{Farley80}, in which a call between a vertex $u$ and a vertex $v$ is implemented by a path between $u$ and $v$ in the graph, with the constraint that all calls performed at the same round must be performed along edge-disjoint paths. (The intermediate nodes along the path do \emph{not} receive the message, which ``cut through'' them.)  Interestingly, the broadcast time of \emph{every} $n$-node graph is exactly $\lceil\log_2n\rceil$. The result extends to networks in which the paths are constructed by an underlying routing function~\cite{CohenFKR98}. The vertex-disjoint variant of the line model, i.e., the line model in which the calls performed at the same round must take place along vertex-disjoint paths, is significantly more complex. There is an $O(\log n/\log\log n)$-approximation algorithm for the vertex-disjoint line model~\cite{KortsarzP95}, which naturally extend to an $O(\log n/\log \mbox{\footnotesize OPT})$-approximation algorithm --- see also~\cite{Fraigniaud01} where an explicit $O(\log n/\log\mbox{\footnotesize OPT})$-approximation algorithm is provided. It is also worth mentioning that the broadcast model has been also extensively studied in models aiming at capturing any type of node- or link-latencies, e.g., the message takes $\lambda_e$ units of time to traverse edge~$e$, and the algorithm in~\cite{Bar-NoyGNS00} also handles  such constraints. Other variants take into account the size of the message, e.g., a message of $L$~bits takes time $\alpha+\beta\cdot L$ to traverse an edge (see~\cite{JohnssonH89}). Under such a model, it might be efficient to split the original message into smaller packets and pipeline the broadcast of these packets through disjoint spanning trees~\cite{JohnssonH89,StoutW90}.

\section{Preliminaries}\label{sec:prelim} 

\paragraph{Parameterized Complexity.}
We refer to the book of Cygan et  al.~\cite{CyganFKLMPPS15} for a detailed introduction to Parameterized Complexity and give here only crucial definitions. 

Formally, a \emph{parameterized problem} is language $L\subseteq \Sigma^*\times\mathbb{N}$, where $\Sigma^*$ is the set of strings over a finite alphabet $\Sigma$. Hence, an instance of a parameterized problem is a pair $(x,k)$, where  $x\in\Sigma^*$ is a string encoding the input and $k\in\mathbb{N}$ is a \emph{parameter}.  
A parameterized problem is said to be \emph{fixed-parameter tractable} (or \classFPT) if it can be solved in $f(k)\cdot |x|^{\Oh(1)}$ time for some computable function~$f(\cdot)$. 

A \emph{kernelization } algorithm (or simply a \emph{kernel}) for a parameterized problem $L$ is a polynomial algorithm that maps each instance $(x,k)$ of $L$ into an instance $(x',k')$ of the same problem such that
\begin{itemize}
\item[(i)] $(x,k)\in L$ if and only if $(x',k')\in L$, that is, the instances $(x,k)$ and $(x',k')$ are equivalent, and
\item[(ii)] $|x'|+k'\leq g(k)$ for a computable function~$g(\cdot)$.
\end{itemize}
It is said that $g(\cdot)$ is the \emph{kernel size}, and a kernel is \emph{polynomial} if $g(\cdot)$ is polynomial. 
It is well-known that every decidable parameterized problem is \classFPT if and only if it has a kernel. However, there are \classFPT parameterized problems that do not admit polynomial kernels up to some reasonable complexity assumptions.  
It is common to present a kernelization algorithm as a series of \emph{reduction rules}. A reduction rule for a parameterized problem is an algorithm that takes an instance of the problem and computes in polynomial time another instance that is more ``simple'' in a certain way.
A reduction rule is \emph{safe} (or \emph{sound}) if the computed instance is equivalent to the input instance.

\paragraph{Integer Programming.}

We will use integer linear programming as a subroutine. The task of \textsc{$p$-Variable Integer Linear Programming Feasibility} problem is to decide, given an $m\times p$ matrix $A$ over $\mathbb{Z}$ and a vector $b\in \mathbb{Z}^{m}$, whether there is a vector $x\in \mathbb{Z}^{p}$ such that $Ax\leq b$.  It was proved by Lenstra~\cite{Lenstra83} and Kannan~\cite{Kannan87}
that this problem is \classFPT when parameterized by $p$ and these results were improved by  Frank and Tardos~\cite{FrankT87}.	

\begin{lemma}[\cite{FrankT87,Kannan87,Lenstra83}]\label{lem:ILP}
\textsc{$p$-Variable Integer Linear Programming Feasibility} can be solved using $\Oh(p^{2.5p+o(p)}\cdot L)$ arithmetic operations and space polynomial in $L$, where $L$ is the number of bits in the input.
\end{lemma}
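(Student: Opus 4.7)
The plan is to layer three classical ingredients, corresponding exactly to the three citations. First, I would apply the Frank--Tardos preprocessing to replace the input $(A,b)$ by an equivalent system $(A',b')$ that has the same set of integer solutions but whose entries have bit-length bounded by a polynomial in $p$ alone, independent of $L$. This step is based on simultaneous Diophantine approximation applied row by row, and is precisely what makes the final $L$-dependence linear rather than polynomial: after preprocessing, all subsequent arithmetic is on numbers of size $\mathrm{poly}(p)$, so a single pass of size $\Oh(L)$ through the original data is the only place $L$ enters.

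Second, I would solve the preprocessed system by Lenstra's geometric algorithm. Compute an ellipsoid pair $E_{\mathrm{in}} \subseteq P \subseteq E_{\mathrm{out}}$ for the polytope $P = \{x \in \mathbb{R}^p : A'x \le b'\}$, with bounded ratio (say $\Oh(p^{3/2})$), via the shallow-cut ellipsoid method. After an affine change of coordinates turning $E_{\mathrm{in}}$ into a Euclidean ball, apply lattice basis reduction to the image of $\mathbb{Z}^p$. Either the inradius of $P$ in the new coordinates exceeds a threshold depending only on $p$, in which case rounding the center of $E_{\mathrm{in}}$ to the nearest lattice vector yields an integer point in $P$; or the first reduced basis vector is short, providing an integer linear functional $x\mapsto \langle v,x\rangle$ that attains at most $w(p)$ distinct values on $P\cap\mathbb{Z}^p$. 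Branch on each such value and recurse on the resulting $(p-1)$-dimensional hyperplane slice.

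The recursion has depth $p$, so the total number of leaves is $w(p)^p$ and the per-node work is polynomial in the (preprocessed) size. Kannan's refinement replaces the LLL-reduced basis by a Korkine--Zolotarev reduced basis, which tightens the per-level branching width enough to reach the stated $p^{2.5p+o(p)}$ count of arithmetic operations. The space bound follows because the recursion is realized as a depth-first search: at any moment only a single root-to-leaf path, of description length polynomial in $L$, needs to be stored; the ellipsoid and basis-reduction subroutines themselves run in space polynomial in $L$.

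The main technical obstacle is the control of the branching width $w(p)$. Lenstra's original analysis gives $w(p)=2^{\Oh(p)}$ through LLL and a direct inradius-versus-shortest-vector comparison; reaching Kannan's bound requires the flatness theorem in its strong form --- every $p$-dimensional lattice-free convex body has lattice-width $\Oh(p^{3/2})$ in some direction --- combined with the $p^{\Oh(p)}$ cost of computing a KZ-reduced basis. Lining these estimates up with the Frank--Tardos preprocessing so that the total count is $\Oh(p^{2.5p+o(p)} \cdot L)$ \emph{arithmetic} operations (as opposed to bit operations), while keeping the space polynomial in $L$, is the final delicate accounting step.
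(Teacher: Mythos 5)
The paper does not prove this lemma at all; it is imported verbatim as a known result of Lenstra, Kannan, and Frank--Tardos, and your sketch is a faithful reconstruction of exactly the argument those citations supply (Frank--Tardos preprocessing for the linear dependence on $L$, Lenstra's ellipsoid-plus-basis-reduction recursion, and Kannan's KZ-reduction refinement for the $p^{2.5p+o(p)}$ branching count). So your proposal matches the paper's (implicit) justification; no discrepancy to report.
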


\paragraph{Graphs.}
We use standard graph-theoretic notation and refer to the textbook of Diestel~\cite{Diestel12} for non-defined notions. We consider only finite,  undirected graphs. We use $V(G)$ and $E(G)$ to denote the sets of vertices and edges of a graph $G$.
  We use $n$ and $m$ to denote the number of vertices and edges  if this does not create confusion. 
For a  graph $G$ and a subset $X\subseteq V(G)$ of vertices, we write $G[X]$ to denote the subgraph of $G$ induced by $X$.
We use $G-X$ to denote the  graph obtained by deleting the vertices of $X$, that is, $G-X=G[V(G)\setminus X]$; we write $G-v$ instead of $G-\{v\}$ for a single element set.
We use the same convention for edges and write $G-S$ and $G-e$ for the graph obtained from $G$ by the removal of a set of edges $S$ and a single edge $e$, respectively.
 For a vertex $v$, $N_G(v)=\{u\in V(G)\mid vu\in E(G)\}$ is the \emph{open neighborhood} of $v$, and $d_G(v)=|N_G(v)|$ is the \emph{degree} of $v$. For a set $S\subseteq V(G)$, $N_G(S)=\bigcup_{v\in S}N_G(v)\setminus S$. Two vertices $u$ and $v$ are \emph{(false) twins} if $N_G(u)=N_G(v)$.
It is said that $G'=G/e$ is obtained by the \emph{contraction} of an edge $e=uv\in E(G)$ if $G'$ is obtained by removing $u$ and $v$ and replacing them by a single vertex adjacent to all the vertices of $N_G(\{u,v\})$. Recall that in \probTB we have a specific source vertex $s$. To keep such a vertex under contraction, we assume that the new vertex obtained by the contraction of an edge incident to $s$ becomes the source and we use the same name $s$ for it.

We write $P=v_1\cdots v_k$ to denote a \emph{path} with the vertices $v_1,\ldots,v_k$ and edges 
$v_1v_2,\ldots,v_{k-1}v_k$;
$v_1$ and $v_k$ are the \emph{end-vertices} of $P$ and the vertices $v_2,\dots,v_{k-1}$ are \emph{internal}. We consider only simple paths, that is, the vertices $v_1,\ldots,v_k$ are distinct. We say that $P$ is an \emph{$(v_1,v_k)$-path}. The \emph{length} of $P$ is the number of edges in $P$, and the \emph{distance} $\dist_G(u,v)$ between two vertices $u$ and $v$ is the minimum length of a $(u,v)$-path. 
We remind that $G$ is \emph{connected} if for every two vertices $u$ and $v$, $G$ has a $(u,v)$-path. We always assume that the considered graphs are connected if it is not explicitly said to be otherwise. If $G$ is disconnected, then the inclusion-wise maximal induced connected subgraphs of $G$ are \emph{(connected) components}. 
An edge $e=uv$ is a \emph{bridge} of a connected graph $G$ if $G-e$  is disconnected; note that $G-e$ has exactly two connected components and one of them contains $u$ and the other $v$. We also remind that $T$ is a \emph{spanning tree} of $G$ if $T$ is a tree subgraph of $G$ with the same set of vertices as $G$. 

A \emph{matching} $M$ in a graph $G$ is a set of edges with distinct endpoints. A vertex $v$ is \emph{saturated} in a matching $M$ if $v$ is incident to an edge of $M$.

A set of vertices $S$ of a graph $G$ is a \emph{vertex cover} if each edge of $G$ has at least one of its endpoints in $G$. The \emph{vertex cover number} of $G$ is the minimum size of a vertex cover. Note that for a vertex cover $S$, the set $I=V(G)\setminus S$ is an \emph{independent set}, that is, any two distinct vertices of $I$ are not adjacent. 
It is well-know (see~\cite{CyganFKLMPPS15}) that it is \classFPT to decide whether $G$ has a vertex cover of size at most $k$ when the problem is parameterized by $k$. The currently best algorithm, given by Chen, Kanj, and Xia~\cite{ChenKX10},  runs in $1.2738^k\cdot n^{\Oh(1)}$ time. 

A set of edges $S$ of a graph $G$ is a \emph{feedback edge set} if $G-S$ has no cycle. The \emph{cyclomatic number} of a graph $G$ is the minimum size of a feedback edge set. It is well-known, that for a connected graph $G$, the cyclomatic number is $m-n+1$ and a feedback edge set can be found in linear time by constructing a spanning tree (see, e.g.,~\cite{CormenLRS09,Diestel12}).

\paragraph{Broadcasting.} 
Let $G$ be a graph and let $s\in V(G)$ be a source vertex from which a message is broadcasted. 
In general, a broadcasting protocol is a mapping that for each round $i\geq 1$, assigns to each vertex $v\in V(G)$ that is either a source or has received the message in rounds $1,\ldots,i-1$, a neighbor $u$ to which $v$ sends the message in the $i$-th round.  However, it is convenient to note that it can be assumed that each vertex $v$ that got the message, in the next $d\leq d_G(v)$ rounds, transmits the message to some neighbors in a certain order in such a way that each vertex receives the message  only once. 
This allows us to formally define a \emph{broadcasting protocol} as a pair $(T,\{C(v)\mid v\in V(T)\})$, where 
$T$ is a spanning tree of $G$ rooted in $s$ and for each $v\in V(T)$, $C(v)$ is an ordered set of children of $v$ in $T$. As soon as $v$ gets the message, $v$ starts to send it to the children in $T$ in the order defined by $C(v)$.
For $G$ and  $s\in V(G)$, we use $b(G,s)$  to denote the minimum integer $t\geq 0$ such that there is a broadcasting protocol such that every vertex of $G$ gets the message after $t$ rounds. 
We say that a broadcasting protocol ensuring that every vertex gets a message in  $b(s,G)$ rounds is  \emph{optimal}. 
We use the following straightforward observation.

\begin{observation}\label{obs:span}
Let $T$ be a spanning tree of $G$ and let $s\in V(G)$. Then $b(G,s)\leq b(T,s)$.
\end{observation}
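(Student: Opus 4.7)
The plan is to exhibit a broadcasting protocol for $G$ whose completion time is exactly $b(T,s)$, which immediately gives $b(G,s)\leq b(T,s)$.

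First I would take an optimal broadcasting protocol for the tree $T$ rooted at $s$. By the formal definition given just before the observation, such a protocol is a pair $(T', \{C(v)\mid v\in V(T')\})$ where $T'$ is a spanning tree of $T$ rooted at $s$ and each $C(v)$ is an ordered set of children of $v$ in $T'$. Since $T$ is itself a tree, its only spanning tree is $T$ itself, so the optimal protocol on $T$ has the form $(T, \{C(v)\mid v\in V(T)\})$, and it informs every vertex within $b(T,s)$ rounds.

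Next I would lift this protocol to $G$. Because $T$ is a spanning tree of $G$ rooted at $s$, the pair $(T, \{C(v)\mid v\in V(T)\})$ is, by the very same definition, also a valid broadcasting protocol for $G$ with source~$s$: every edge of $T$ is an edge of $G$, every $C(v)$ is an ordered set of neighbors of $v$ in $G$, and the vertex set of $T$ is $V(G)$. The round in which any vertex $v$ receives the message depends only on the tree structure and the orderings $C(\cdot)$ along the path from $s$ to $v$ in $T$, and these are identical in the two settings. Hence every vertex of $G$ is informed within the same number of rounds as in~$T$, namely at most $b(T,s)$.

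The only mild subtlety, and the one worth stating explicitly, is that in the telephone model a vertex informed at some round is only allowed to call \emph{one} neighbor per subsequent round; the lifted protocol satisfies this because the calls prescribed by $C(v)$ are made one per round in $T$ and we make the same calls in $G$ (no vertex is informed earlier in $G$ under this protocol than it was in $T$, so no conflict is introduced). Since the protocol we constructed completes in at most $b(T,s)$ rounds, we conclude $b(G,s)\leq b(T,s)$, as required. No serious obstacle is expected; the observation is essentially a definitional unfolding.
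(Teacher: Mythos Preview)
Your proof is correct. The paper does not give a proof of this observation at all---it simply calls it ``straightforward'' and moves on---so your argument, which lifts an optimal protocol on $T$ verbatim to a protocol on $G$ via the definition of a broadcasting protocol as a rooted spanning tree with ordered children, is exactly the intended (and essentially only) justification.
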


Observation~\ref{obs:span} can be generalized as follows.

\begin{observation}\label{obs:sub}
Let $T$ be a tree subgraph of $G$ and let $s\in V(T)$. Then $b(G,s)\leq b(T,s)+|V(G)\setminus V(T)|$.
\end{observation}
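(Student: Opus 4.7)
The plan is to build, from an optimal broadcast protocol on $T$, an explicit broadcast protocol on $G$ completing in at most $b(T,s)+|V(G)\setminus V(T)|$ rounds. Write $W=V(G)\setminus V(T)$ and $w=|W|$; I will split the schedule into two phases.

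First phase (rounds $1,\ldots,b(T,s)$): execute an optimal broadcast protocol for $T$ rooted at $s$. Since $T$ is a subgraph of $G$, every edge used is available in $G$. After $b(T,s)$ rounds every vertex of $V(T)$ holds the message and has finished transmitting along its $T$-children, so every vertex of $V(T)$ is idle starting from round $b(T,s)+1$.

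Second phase (rounds $b(T,s)+1,\ldots,b(T,s)+w$): inform the vertices of $W$ one per round. At the start of round $b(T,s)+i$ the set of uninformed vertices has size $w-i+1$ and is contained in $W$; since $G$ is connected, there exists an edge $uv\in E(G)$ with $u$ informed and $v\in W$ uninformed. Such a $u$ is free to transmit in this round: either $u\in V(T)$ and has been idle since the end of the first phase, or $u\in W$ was informed in a strictly earlier second-phase round and has been idle since. Schedule the call $u\to v$ to inform one new vertex; after $w$ such rounds all of $W$ is informed.

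To present this formally as a broadcast protocol in the sense of Section~\ref{sec:prelim}, take the spanning tree $T'$ of $G$ obtained by appending, for each $v\in W$, the edge used to inform it in the second phase, and order each vertex's children with its $T$-children first (in the order prescribed by the protocol for $T$) followed by its $W$-children in the order in which they are informed. The only point that needs any justification is that the second phase can always make progress, i.e.\ that an informed-to-uninformed edge exists at every round as long as $W$ is not exhausted; this is immediate from the connectivity of $G$, so the argument has no real obstacle beyond this bookkeeping.
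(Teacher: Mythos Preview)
Your proof is correct and follows essentially the same approach as the paper: start from an optimal protocol on $T$, extend $T$ to a spanning tree $T'$ of $G$, and place the new children after the $T$-children in each ordered set, so that all of $V(T)$ is informed by round $b(T,s)$ and thereafter at least one vertex of $W$ is informed per round by connectivity. The only cosmetic difference is that the paper fixes an arbitrary spanning extension $T'\supseteq T$ up front and then argues the one-per-round property, whereas you build $T'$ greedily round by round; both arguments rest on the same connectivity observation.
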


\begin{proof}
Let $P=(T,\{C(v)\mid v\in V(T)\})$ an optimal broadcasting protocol for $T$. We extend it to a protocol for $G$ as follows. First, we construct a spanning subtree $T'$ of $G$ by extending $T$, that is, $T$ is a subtree of $T'$. For every $v\in V(T)$, we construct the ordered set of the children $C'(v)$ by appending to the end of $C(v)$ the children of $v$ in $T'$ that are not in $T$ in arbitrary order. For every $v\in V(T')\setminus V(T)$, we define $C'(v)$ to be an arbitrary ordered set of the children of $v$. We have that $(T',\{C'(v)\mid v\in V(T)\})$ is a broadcasting protocol for $G$  that takes at most $b(T,s)+|V(G)\setminus V(T)|$ rounds to transfer the message to the vertices of $G$, because the vertices of $T$ get the message in the first $b(T,s)$ rounds and in every subsequent round, at least one vertex of $V(G)\setminus V(T)$ receives the message, unless all the vertices are already aware of the message.
\end{proof}

As it was proved by Proskurowski~\cite{Proskurowski81} and Slater, Cockayne, and Hedetniemi~\cite{SlaterCH81}, $b(G,s)$ can be computed in linear time for trees by dynamic programming.

\begin{lemma}[\cite{Proskurowski81,SlaterCH81}]\label{lem:trees}
For an $n$-vertex tree $T$ and $s\in V(T)$, $b(T,s)$ can be computed in $\Oh(n)$ time. 
\end{lemma}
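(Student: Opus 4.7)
The plan is to root the tree $T$ at $s$ and compute, by a bottom-up (post-order) dynamic program, the value $b_v := b(T_v, v)$ for every vertex $v$, where $T_v$ is the subtree rooted at $v$. The final answer is $b_s = b(T,s)$. For a leaf $v$, clearly $b_v = 0$. For an internal vertex $v$ with children $c_1, \ldots, c_d$ in the rooted tree, I would prove the following structural identity: if the children are labelled so that $b_{c_1} \geq b_{c_2} \geq \cdots \geq b_{c_d}$, then
\[
b_v \;=\; \max_{1 \leq i \leq d}\bigl(\,i + b_{c_i}\,\bigr).
\]
The upper bound comes from the protocol in which $v$ transmits to $c_i$ in round $i$ (in this order), after which $c_i$ needs $b_{c_i}$ additional rounds to finish broadcasting in $T_{c_i}$. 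Independence of the subtrees lets these $d$ sub-broadcasts run in parallel, giving completion time $\max_i(i + b_{c_i})$.

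For the matching lower bound, I would argue by an exchange argument: in any optimal protocol at $v$, whenever two adjacent children in the calling order appear out of order (i.e.\ a child with a smaller $b$-value is called before one with a larger $b$-value), swapping them cannot increase the maximum completion time. Iterating yields an optimal schedule in non-increasing order of $b_{c_i}$, which matches the expression above. Combining both directions gives the recurrence, which by induction on the height of the tree computes $b_v$ correctly at every vertex.

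Given the recurrence, implementation is straightforward: traverse $T$ in post-order, and at each vertex $v$ sort its children's $b$-values in non-increasing order and take the maximum of $i + b_{c_i}$. A naive implementation costs $O(n \log n)$ because of sorting. To obtain the linear-time bound claimed in Lemma~\ref{lem:trees}, one exploits that all $b$-values are integers in $\{0, 1, \ldots, n-1\}$, that $\sum_v d_v = O(n)$, and that children can be bucketed/counting-sorted in time proportional to the number of children processed; this is exactly the implementation given in the cited works of Proskurowski~\cite{Proskurowski81} and Slater, Cockayne, and Hedetniemi~\cite{SlaterCH81}.

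The main obstacle is the optimality of the non-increasing ordering: one has to show that the exchange argument really applies locally at each internal vertex, independently of what happens deeper in the subtrees, which relies crucially on the fact that distinct subtrees $T_{c_i}$ share no vertices and hence their broadcasts do not interfere. Once this is settled, the DP and linear-time implementation are routine and follow the references.
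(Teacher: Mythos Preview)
The paper does not give its own proof of this lemma; it is stated as a known result and attributed to Proskurowski~\cite{Proskurowski81} and Slater, Cockayne, and Hedetniemi~\cite{SlaterCH81}, with the one-line remark that the computation is by dynamic programming. Your proposal correctly reconstructs that classical argument: the bottom-up recurrence $b_v=\max_i(i+b_{c_i})$ over children sorted by non-increasing $b$-value, justified by the greedy/exchange argument, together with the observation that integer bucket sorting over values in $\{0,\ldots,n-1\}$ brings the total work down to $O(n)$. There is nothing to compare against in the present paper, and your sketch is faithful to the cited sources.
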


\section{Exact Algorithm for the Broadcast Problem}

We prove that \probTB can be solved in a single-exponential time by an exact algorithm, hence establishing Theorem~\ref{thm:exact}. We do so using dynamic programming over subsets (see the textbook~\cite{FominK10} for an introduction to exact exponential algorithms). 

\paragraph{Proof of Theorem~\ref{thm:exact}.}

Let $(G,s,t)$ be an instance of \probTB. For each $i\in\{0,\ldots,t\}$, we enumerate all subsets of  vertices $X\subseteq V(G)$ containing $s$ such that $G[X]$ is a connected graph and $b(G[X],s)\leq i$. 
We denote these families of sets  $\mathcal{L}_i$ for $i\in\{0,\ldots,t\}$. Observe that $(G,s,t)$ is a yes-instance if and only if $V(G)\in \mathcal{L}_t$.
For $i=0$, it is straightforward that $\mathcal{L}_i$ contains the unique set $\{s\}$. 
We consecutively compute $\mathcal{L}_i$ for $i=1,2,\ldots,t$ using the following claim.

\begin{claim}\label{cl:match}
Let $X\subseteq V(G)$ such that $s\in X$ and $G[X]$ is connected, and let $i\geq 1$. Then $b(G[X],s)\leq i$ if and only if there is $Y\subseteq X$ such that (i) $s\in Y$, $G[Y]$ is connected, $b(G[Y],s)\leq i-1$, and (ii) either $X=Y$ or the bipartite graph $H$ with the set of vertices $X$ and the edges $E(H)=\{uv\in E(G)\mid u\in Y,~v\in X\setminus Y\}$ has a matching saturating every vertex of $X\setminus Y$. 
\end{claim}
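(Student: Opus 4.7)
My plan is to prove both directions by passing between broadcast protocols on $G[X]$ and pairs consisting of (a) a broadcast protocol on an induced subgraph $G[Y]$ together with (b) a ``round-$i$'' matching that delivers the message to the remaining vertices $X\setminus Y$. The key observation is that the set $Y$ of vertices informed after $i-1$ rounds of any optimal protocol on $G[X]$ is exactly what condition~(i) asks for, and the round-$i$ edges of that protocol are exactly a matching of the type required by~(ii).

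For the forward direction, I would take an optimal protocol $(T,\{C(v)\})$ for $G[X]$ from $s$ achieving broadcast in at most $i$ rounds, and let $r(v)$ denote the round in which $v$ receives the message (so $r(s)=0$). Define $Y=\{v\in X:r(v)\le i-1\}$. Then $s\in Y$. Connectivity of $G[Y]$ follows from the fact that along the $s$-to-$v$ path in $T$ the values of $r(\cdot)$ are strictly increasing, so every ancestor of a $Y$-vertex also lies in $Y$; hence $T[Y]$ is a subtree of $T$ containing $s$ and is already connected. Restricting $T$ and the orderings $C(v)$ to $Y$ (by dropping from each $C(v)$ those children not in $Y$) yields a protocol for $G[Y]$ whose completion round for each $v\in Y$ is at most its original completion round, giving $b(G[Y],s)\le i-1$. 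Finally, for each $v\in X\setminus Y$ its parent $u$ in $T$ satisfies $r(u)<r(v)=i$, so $u\in Y$; the collection of parent-edges $\{uv:v\in X\setminus Y\}$ is a matching saturating $X\setminus Y$ because each $u\in Y$ makes at most one call in round~$i$, and every such edge lies in $H$.

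For the reverse direction, take a protocol $(T_Y,\{C_Y(v)\})$ witnessing $b(G[Y],s)\le i-1$ and a matching $M\subseteq E(H)$ saturating $X\setminus Y$. Build a spanning tree $T$ of $G[X]$ by attaching, for each matched pair $uv\in M$ with $u\in Y$ and $v\in X\setminus Y$, the vertex $v$ as a new leaf-child of $u$; for each such $u$ set $C(u)=C_Y(u)$ with $v$ appended at the end, and $C(v)=\emptyset$ for $v\in X\setminus Y$. The main calculation is that if $u$ receives the message at round $r(u)$ in the $Y$-protocol, then $u$ calls its last $Y$-child at round $r(u)+|C_Y(u)|\le i-1$ (since that child belongs to $Y$), so $u$ calls its appended matched partner $v$ at round $r(u)+|C_Y(u)|+1\le i$. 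Because $M$ is a matching, each $u\in Y$ has at most one appended child, so no scheduling conflict arises; and every $v\in X\setminus Y$ is reached because $M$ saturates $X\setminus Y$. This gives $b(G[X],s)\le i$.

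The one place that requires attention, rather than being fully routine, is the slack computation in the backward direction: one must notice that ``$b(G[Y],s)\le i-1$'' is strong enough to guarantee not merely that every $u\in Y$ is \emph{informed} by round $i-1$, but also that $u$ has \emph{finished} its prescribed calls in $T_Y$ by round $i-1$, which is precisely what frees up round $i$ for the matching edge. In the forward direction the mildly delicate point is to check that restricting $C(v)$ to $Y$ never delays any $Y$-vertex; this is the earlier inequality $r'(w)\le r(w)$ obtained by induction on depth in $T$.
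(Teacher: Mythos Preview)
Your proof is correct and follows essentially the same approach as the paper: in both directions you and the paper take $Y$ to be the set of vertices informed after $i-1$ rounds and use the round-$i$ parent edges (respectively, the matching edges appended at the end of each $C(u)$) as the required matching. Your write-up is more explicit than the paper's about why $G[Y]$ is connected and why appending a single matched leaf to each $C_Y(u)$ fits within round~$i$, but the underlying argument is the same.
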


\begin{proof}[Proof of Claim~\ref{cl:match}]
Suppose that for $X\subseteq V(G)$, it holds that  $s\in X$, $G[X]$ is connected, and $b(G[X],s)\leq i$. Consider an optimal broadcasting protocol for $G[X]$ and $s$. Let $Y\subseteq X$ be the set of vertices that receive the message in the first $i-1$ rounds. We have that (i) holds for $Y$. Condition (ii) is trivial if $X=Y$. Assume that this is not the case and 
$X\setminus Y\neq\emptyset$. Then for every $v\in X\setminus Y$ there is $u_v\in Y$ such that $v$ receives the message of the $i$-th round from $u_v$. We obtain that $M=\{vu_v\mid v\in X\setminus Y\}$ is a matching in $H$ saturating every vertex of $X\setminus Y$.  

Assume now that there is $Y\subseteq X$ satisfying (i) and (ii). Then there is a broadcasting protocol for $G[Y]$ and $s$ such that every vertex of $Y$ gets the message in $i-1$ rounds. If $X=Y$, then $b(G[X],s)\leq i-1$. Suppose that $X\setminus Y\neq\emptyset$. Then $H$ has a matching $M$ which saturates every vertex of $X\in Y$. We extend the protocol for $Y$ and $s$ by defining that each vertex $u\in Y$ which is saturated in $M$ sends the message to the neighbor along the edge of $M$ incident to $v$ in the last available round. Thus each vertex of $X\setminus Y$  gets the message in $i$  rounds.
\end{proof}

Suppose that $i\geq 1$ and the family $\mathcal{L}_{i-1}$ is already constructed. Initially, we set $\mathcal{L}_i:=\mathcal{L}_{i-1}$. 
Then  we consider all subsets $X\subseteq V(G)$ and proper $Y\subset X$ such that $s\in X$,
$G[X]$ is connected, and $Y\in\mathcal{L}_{i-1}$. For each pair $(X,Y)$ of such sets, we check whether  the bipartite graph $H$ with the set of vertices $X$ and the edges $E(H)=\{uv\in E(G)\mid u\in Y,~v\in X\setminus Y\}$ has a matching saturating every vertex of $X\setminus Y$, and if this holds, then we set $\mathcal{L}_i:=\mathcal{L}_i\cup \{X\}$. 

Claim~\ref{cl:match} guarantees  correctness of the construction of $\mathcal{L}_i$ from $\mathcal{L}_{i-1}$. To evaluate the running time, note that we consider at most $3^n$ pairs of sets $(X,Y)$. Then for each pair, $H$ can be constructed in polynomial time and then the existence of a matching saturating the vertices of $X\setminus Y$ can be verified in polynomial time in the standard way (see, e.g.,~\cite{CormenLRS09}). Thus $\mathcal{L}_i$ can be constructed in $3^n\cdot n^{\Oh(1)}$ time. Taking into account that we iterate for $i=1,2,\ldots,t$ and we can assume without loss of generality that $t\leq n-1$, we have that the total running time is $3^n\cdot n^{\Oh(1)}$.
\qed

 \section{\probTB parameterized by the cyclomatic number}\label{sec:cyc} 
 In this section, we prove Theorem~\ref{thm:cyc}. We need some auxiliary results. 
 
 Let $T$ be a tree and let $x$ and $y$ be distinct leaves, that is, vertices of degree one in $T$. For an integer $h\geq 0$, we use $b_h(T,x,y)$ to denote the minimum number of rounds needed to broadcast the message from the source $x$ to $y$ in such a way that every vertex of $T$ gets the message in at most $h$ rounds. We assume that $b_h(T,x,y)=+\infty$ if $b(T,x)>h$. We prove that 
 $b_h(T,x,y)$ can be computed in linear time  similarly to $b(T,s)$ (see~\cite{Proskurowski81,SlaterCH81}).  The difference is that it is more convenient to use recursion instead of dynamic programming. 
 
 \begin{lemma}\label{lem:bxy}
 For an $n$-vertex tree $T$ with given  distinct leaves $x$ and $y$ of $T$ and an integer $h\geq 0$, $b_h(T,x,y)$ can be computed in $\Oh(n)$ time. 
  \end{lemma}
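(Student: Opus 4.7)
The plan is to exploit the fact that, in a tree, the message travels from $x$ to $y$ along the unique $(x,y)$-path $P = u_0 u_1 \cdots u_k$ with $u_0 = x$ and $u_k = y$. Deleting the edges of $P$ decomposes $T$ into $k+1$ subtrees; for each $u_i$, let $T_i^{(1)}, \ldots, T_i^{(d_i)}$ be the subtrees rooted at its off-path children $c_i^{(j)}$, and let $\beta_i^{(j)} := b(T_i^{(j)}, c_i^{(j)})$. By Lemma~\ref{lem:trees}, all the values $\beta_i^{(j)}$ can be computed in $\Oh(n)$ time in total.

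I would maintain $t_i$, the round at which $u_i$ receives the message (with $t_0 = 0$), and process $u_0, u_1, \ldots, u_{k-1}$ in order. At $u_i$, the vertex must, in some order, send the message to each of the $d_i$ subtree roots and along $P$ to $u_{i+1}$; this amounts to choosing a position $p_i \in \{1, \ldots, d_i + 1\}$ for the path edge and filling the remaining positions with the subtree roots. Two exchange arguments drive the plan: (a) for any fixed $p_i$, filling the non-path positions with the subtrees in decreasing order of $\beta_i^{(j)}$ minimizes the latest subtree completion time, which is the standard argument underlying the linear-time algorithm for $b(T,s)$; (b) among all values of $p_i$ feasible under the deadline $h$, the smallest one is optimal, because $t_{i+1} = t_i + p_i$ and any increase in $t_{i+1}$ tightens the deadline at $u_{i+1}$ (and every later path vertex) without relaxing any constraint at $u_i$. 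Thus the algorithm selects, at each $u_i$, the smallest $p_i$ satisfying
\[
\max\Bigl(\max_{\ell < p_i}\bigl(\ell + \beta_{i,(\ell)}\bigr),\ \max_{\ell \geq p_i}\bigl(\ell + 1 + \beta_{i,(\ell)}\bigr)\Bigr) \leq h - t_i,
\]
where $\beta_{i,(1)} \geq \beta_{i,(2)} \geq \cdots \geq \beta_{i,(d_i)}$ denotes the sorted list of subtree broadcast times at $u_i$, sets $t_{i+1} := t_i + p_i$, and outputs $+\infty$ if no feasible $p_i$ exists. The final answer is $t_k$.

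The main obstacle will be reaching the $\Oh(n)$ bound rather than the straightforward $\Oh(n\log n)$. Sorting the $\beta_i^{(j)}$ separately per $u_i$ naively introduces the logarithmic factor. To avoid it, I would radix-sort all $\beta_i^{(j)}$ globally in $\Oh(n)$ (they are integers in $[0,n-1]$) and then distribute them into per-vertex sorted lists in linear time. With each list sorted, the suffix maxima needed to evaluate the inequality above and to locate the smallest feasible $p_i$ are computable in $\Oh(d_i)$, yielding $\Oh(\sum_i d_i) = \Oh(n)$ for the walk along $P$. Combined with the $\Oh(n)$ preprocessing from Lemma~\ref{lem:trees}, this would deliver the claimed linear-time algorithm.
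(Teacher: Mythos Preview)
Your approach is correct and essentially the same as the paper's: both walk along the unique $(x,y)$-path, hang the remaining subtrees off each path vertex, compute their broadcast times via Lemma~\ref{lem:trees}, sort them in decreasing order, and greedily pick at each path vertex the earliest slot for the forward path edge that still lets all pendant subtrees finish by the deadline~$h$. The paper phrases this as a recursion $b_h(T,x,y)=j+b_{h-j}(T',v_2,y)$ while you phrase it as an iteration maintaining $t_i$, and you are more explicit than the paper about using a global radix sort on the $\beta$-values to avoid the $\log n$ factor; these are presentational differences only.
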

 
\begin{proof}
Let $T$ be a tree and let $x$ and $y$ be leaves of $T$. Let also $h\geq 0$ be an integer. We use Lemma~\ref{lem:trees} to check whether $b(T,x)\leq h$ and set $b_h(T,x,y)=+\infty$ if 
$b(T,x)> h$, because in this case the message cannot be transmitted to every vertex of $T$ in $h$ rounds. From now we assume that $b(T,x)\leq h$. Notice that this implies that $b_h(T,x,y)<+\infty$.

Let $P=v_1\ldots v_\ell$ be the $(x,y)$-path in $T$, $x=v_1$ and $y=v_\ell$. Note that if $\ell=2$, then, trivially, $b_h(T,x,y)=1$. Suppose that $\ell\geq 3$.  
Denote by $T'$ the subtree of $T$ containing $v_2$ and $y$ such that $v_2$ is a leaf of $T'$. We construct recurrences that allow to compute $b_h(T,x,y)$ from $b_{h'}(T',v_2,y)$ for a certain $h'<h$.

If $d_T(v_2)=2$, then $b_h(T,x,y)=1+b_{h-1}(T',v_2,y)$ by definition.

Suppose that $d_T(v_2)\geq 3$. Denote by $u_1,\ldots,u_k$ the neighbors of $v_2$ distinct from $v_1$ and $v_3$, and let $T_1,\ldots,T_k$ be the connected components of $T-v_2$ containing $u_1,\ldots,u_k$, respectively. Notice that $k\leq h-2$, because at least $k+2$ rounds are needed to transmit the message to $u_1,\ldots,u_k$ and $v_3$. Observe also that the message is broadcasted from $v_2$ to every $u_i$ and to broadcast the message from each $u_i$ to other vertices of $T_i$, we can use an optimal protocol for $T_i$ with the source $u_i$.  
We compute $b(T_i,u_i)$ for $i\in\{1,\ldots,k\}$ and assume that $b(T_1,u_1)\geq \dots\geq b(T_k,u_k)$.  
Observe that $\max\{b(T_i,u_i)+i\mid i\in\{1,\ldots,k\}\}\leq  h-1$, because at least $\max\{b(T_i,u_i)+i\mid i\in\{1,\ldots,k\}\}+1$ rounds are necessary to transmit the message to all the vertices of $T_1,\ldots,T_k$. 

 Because  $\max\{b(T_i,u_i)+i\mid i\in\{1,\ldots,k\}\}\leq h-1$, there is $j\in\{1,\ldots,k+1\}$ such that
 either $j\leq k$ and $\max\{b(T_i,u_i)+i+1\mid i\in\{j,\ldots,k\}\}\leq h-1$ or $j=k+1$. We find the minimum value of $j$ satisfying this condition. Observe that $j+1$ is the minimum number of a round such that the message can be transmitted from $v_2$ to $v_3$ in such a way that every vertex of $T_1,\ldots,T_k$ gets the message in at most $h$ rounds. Because we aim  to minimize the transmission time to $y=v_k$, we set 
 $b_h(T,x,y)=j+b_{h-j}(T',v_2,y)$. 
 
 Correctness of the algorithm follows from its description. To evaluate the running time, note that to compute $b_h(T,x,y)$, we have to find $P=v_1\ldots v_\ell$ and then for each $i\in\{2,\ldots,\ell\}$, compute the neighbors $u_1^i,\ldots,u_{k_i}^i$ of $v_i$ distinct from $v_{i-1}$ and $v_{i+1}$ and the components $T_1^i,\ldots,T_{k_i}^i$ of $T-V(P)$ containing  $u_1^i,\ldots,u_{k_i}^i$. All these computations can be done in the total linear time.  Also in $\Oh(n)$ time, we can order  $T_1^i,\ldots,T_{k_i}^i$  by the decrease $b(T_p^i,u_p)$ for $p\in \{1,\ldots,k_i\}$. Then computing $b_h(T,x,y)$ can be done in linear time following the recurrences. 
\end{proof}
 
We also need a subroutine computing the minimum number of rounds for broadcasting from two sources with the additional constraint that the second source starts sending the message with a delay. 
 Let $T$ be a tree and let $x$ and $y$ be distinct leaves of $T$.  Let also $h\geq 0$ be an integer. We use $d_h(T,x,y)$ to denote the minimum rounds needed to broadcast the message from $x$ and $y$ to every vertex of $T-y$ in such a way that $y$ can send the message starting from the $(h+1)$-th round (that is, we assume that $y$ gets the message from outside in the $h$-th round).   
 
   \begin{lemma}\label{lem:dxy}
 For an $n$-vertex tree $T$ with given  distinct leaves $x$ and $y$ of $T$ and an integer $h\geq 0$, $d_h(T,x,y)$ can be computed in $\Oh(n^2)$ time. 
  \end{lemma}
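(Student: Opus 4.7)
My plan is to compute $d_h(T,x,y)$ by enumerating the edge on the $(x,y)$-path at which the ``front'' coming from $x$ meets the ``front'' coming from $y$. Let $P=v_1\cdots v_\ell$ be the unique $(x,y)$-path in $T$, with $v_1=x$ and $v_\ell=y$; since $x\neq y$, we have $\ell\ge 2$.

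The structural basis for the algorithm is the following. In any broadcasting protocol on $T$ with sources $x$ and $y$, every non-source vertex is informed by exactly one neighbor, so the ``informer'' edges form a spanning forest of $T$ with at most two components, one containing $x$ and one containing $y$ (when $y$ is not absorbed into the $x$-rooted tree). Because $T$ is a tree and $P$ is the unique $(x,y)$-path, these two components must be obtained by deleting exactly one edge $v_jv_{j+1}$ of $P$, for some $j\in\{1,\dots,\ell-1\}$. Once such a cut is fixed, the two sides become independent single-source broadcasting instances: on the component $T_1^{(j)}$ containing $x$, the source $x$ starts at round~$0$ and must inform every vertex of $T_1^{(j)}$; on the component $T_2^{(j)}$ containing $y$, the source $y$ is informed at round~$h$ from outside and must then inform every vertex of $T_2^{(j)}-y$.

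The algorithm therefore iterates $j\in\{1,\dots,\ell-1\}$ and, for each $j$, builds $T_1^{(j)}$ and $T_2^{(j)}$, computes $b(T_1^{(j)},x)$ and $b(T_2^{(j)},y)$ using Lemma~\ref{lem:trees} in $\Oh(n)$ time, and evaluates
$$\tau_j=\max\bigl(b(T_1^{(j)},x),\ h+b(T_2^{(j)},y)\bigr),$$
with the convention that the second argument is replaced by $0$ when $T_2^{(j)}=\{y\}$, since then no vertex of $T_2^{(j)}-y$ needs to be informed. The output is $\min_j\tau_j$. With $\Oh(n)$ work per iteration and at most $\ell-1=\Oh(n)$ iterations, the total running time is $\Oh(n^2)$, as claimed.

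The main obstacle will be justifying the structural lemma, namely that restricting attention to protocols induced by a cut on $P$ is without loss of generality. The only worry is a protocol in which $x$'s broadcast reaches $y$ through $T$, potentially letting $y$ start forwarding earlier than round $h+1$. Here I would use the fact that $y$ is a leaf: the only vertex $y$ can ever forward to is $v_{\ell-1}$, and whenever $v_{\ell-1}$ is already informed from the $x$-side, $y$'s help is redundant. Hence any protocol in which $y$ is first informed through $T$ rather than from outside is dominated by the protocol corresponding to the cut $j=\ell-1$ (in which $T_2^{(j)}=\{y\}$ and $y$ contributes nothing), which is already among those enumerated.
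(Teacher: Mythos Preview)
Your proposal is correct and follows essentially the same approach as the paper: both argue that the vertices informed from $x$ and from $y$ form a partition of $V(T)$ into two subtrees separated by a single edge of the $(x,y)$-path, then enumerate that edge and take the minimum over the resulting maxima $\max\{b(T_1,x),\,h+b(T_2,y)\}$, invoking Lemma~\ref{lem:trees} on each side for an $\Oh(n)$ cost per cut and $\Oh(n^2)$ overall. The only cosmetic difference is that the paper treats the case $b(T-y,x)\le h$ as an explicit special case, whereas you absorb it into the enumeration via your convention at $j=\ell-1$; your extra discussion of whether $x$'s broadcast could reach $y$ early is sound (and more detailed than the paper's one-line justification) but not strictly needed under the paper's definition, since $y$ is stipulated to start sending only from round $h+1$.
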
 
 
 \begin{proof}
 Let $T$ be a tree and let $x$ and $y$ be leaves of $T$. Let also $h\geq 0$ be an integer. Consider the $(x,y)$-path $P=v_1\ldots v_\ell$ in $T$, where $x=v_1$ and $y=v_\ell$. 
 Then we have that 
 $d_h(T,x,y)=b(T-y,x)$ if $b(T-y,x)\leq h$.
 If $b(T-y,x)>h$, then
 \begin{equation}\label{eq:split}
 d_h(T,x,y)=\min_{i\in\{2,\ldots,\ell\}}\max\{b(T_1^i,x),b(T_2^i,y)+h\},
  \end{equation}
 where  $T_1^i$ is the connected component of $T-v_{i-1}v_i$ containing $x$ and $T_2^i$ is the connected component of $T-v_{i-1}v_i$ containing $y$.
 The equality (\ref{eq:split}) follows from the observation that the sets of vertices receiving the message from $x$ and $y$, respectively, should form a partition of $V(T)$, where each part induces a subtree.  The path $P$ can be found in linear time. Then computing $d_h(T,x,y)$ can be done in $\Oh(n^2)$ time using Lemma~\ref{lem:trees}.  
 \end{proof}

 Now we are ready to prove Theorem~\ref{thm:cyc}.
 
 \subsection*{Proof of Theorem~\ref{thm:cyc}}
 Let $(G,s,t)$ be an instance of \probTB. If $G$ is a tree, then we can compute $b(G,s)$ in linear time using Lemma~\ref{lem:trees}. Assume that this is not the case, and let $k=m-n+1\geq 1$ be the cyclomatic number of $G$. We find in linear time a feedback edge set $S$ of size $k$ by finding an arbitrary spanning tree $F$ of $G$ and setting $S=E(G)\setminus E(F)$. 
 
We iteratively construct the set $U$ as follows. Initially, we set $U:=W=\{s\}\cup\{v\in V(G)\mid v\text{ is an endpoint of an edge of }S\}$. Then while $G-U$ has a vertex $v$ such that $G$ has three internally vertex-disjoint  paths joining $v$ and $U$, we set $U:=U\cup\{v\}$. The properties of $U$ are summarized in the following claims.

\begin{claim}\label{cl:U-size}
$|U|\leq 4k$.
\end{claim}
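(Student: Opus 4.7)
The plan is to identify $U$ as a subset of the minimal subtree $F_W$ of the spanning tree $F$ that contains the anchor set $W$, and then count leaves and branching vertices of $F_W$. Since $W$ consists of $s$ together with the (at most) $2k$ endpoints of the $k$ feedback edges in $S$, one has $|W|\le 2k+1$. The target inequality is $|U|\le |W|+(|W|-2)\le 4k$, which will follow once we know that the leaves of $F_W$ all lie in $W$ (hence there are at most $|W|$ of them) and use the standard tree fact that a tree with $\ell$ leaves has at most $\ell-2$ vertices of degree at least three.

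The heart of the argument is to push every certificate of ``three internally vertex-disjoint paths'' into the tree $F$. Given a vertex $v\notin U$ admitting three such paths to $U$ in $G$, I would truncate each of them at its first vertex in $U$. The internal vertices of the truncated subpath then lie outside $U$, and crucially the subpath cannot traverse any feedback edge of $S$: both endpoints of an $S$-edge lie in $W\subseteq U$, so such an edge would either place an internal vertex in $U$, or make $v$ itself belong to $W$, either of which is a contradiction. Hence the three truncated paths live in $F$; being internally vertex-disjoint paths in the tree $F$, they must leave $v$ through three distinct edges and reach three distinct vertices of $U$ lying in three distinct components of $F-v$.

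With this reduction in hand, a straightforward induction on the order in which vertices are added shows that $U\subseteq V(F_W)$: if $v\notin V(F_W)$ then the branch of $F$ containing $v$ attaches to $F_W$ at a single vertex through which every $F$-path from $v$ to $F_W\supseteq U$ must pass, precluding three internally vertex-disjoint such paths. The same reduction also forces every newly added vertex to have degree at least three in $F_W$, i.e.\ to be a branching vertex of $F_W$. The counting then concludes: $|U|$ is at most $|W|$ plus the number of branching vertices of $F_W$, and standard tree combinatorics bounds the latter by $|W|-2\le 2k-1$, yielding $|U|\le (2k+1)+(2k-1)=4k$. The only delicate point is the path-reduction step, where one must rule out an $S$-edge sitting at the first edge, the last edge, or an interior edge of a truncated path; once those corner cases are dispatched, the rest of the argument is routine tree counting.
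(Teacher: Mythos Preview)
Your proof is correct and follows essentially the same route as the paper: both identify the minimal subtree of the spanning tree spanned by $W$ (the paper constructs it by iteratively pruning degree-one vertices of $G$ not in $W$ and then deleting the edges of $S$), observe that every vertex added to $U$ beyond $W$ is a branching vertex of this subtree, and then invoke the standard bound that a tree with $\ell$ leaves has at most $\ell-2$ vertices of degree at least three. Your treatment of the path-reduction step---truncating the three paths at their first $U$-vertex and arguing that no $S$-edge can survive---is in fact more detailed than the paper's, which simply asserts the key equivalence without proof.
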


\begin{proof}[Proof of Claim~\ref{cl:U-size}]
The claim is trivial if $U=W$, because $|U|\leq 2k+1\leq 4k$ in this case. Assume that $|U|>|W|$, that is, $G-W$ has a  vertex $v$ such that $G$ has three internally vertex-disjoint  paths joining $v$ and $W$. Consider the graph $G'$ obtained from $G$ by the iterative deletion of all the vertices of degree one that are not in $W$. Observe that $F=G'-S$ is a tree whose leaves are in $W$, and each vertex $v\in V(G')\setminus W$ can be joined by three vertex-disjoint paths with $W$ if and only if $v$ is a vertex of degree at least three in $F$.  It is folklore knowledge that if $\ell$ is the number of leaves and $p$ is the number of vertices of degree at least three in a tree, then $p\leq \ell-2$. This implies that $U$ is constructed from $W$ by adding at most $|W|-2\leq 2k-1$ vertices. Hence, 
$|U|\leq 4k$.
\end{proof}

\begin{claim}\label{cl:U-comp}
For each connected component $F$ of $G-U$, $F$ is a tree such that each vertex $x\in U$ has at most one neighbor in $F$ and
\begin{itemize}
\item[(i)] either $U$ has a unique vertex $x$ that has a neighbor in $F$,
\item[(ii)] or $U$ contain exactly two vertices $x$ and $y$ having neighbors in $F$. 
\end{itemize}
\end{claim}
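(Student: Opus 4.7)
The plan is to establish the three properties in turn, leaning on two structural facts: (a) every edge of the feedback set $S$ has both endpoints in $W\subseteq U$, and (b) when the construction terminates, no vertex of $V(G)\setminus U$ admits three internally vertex-disjoint paths to $U$ in $G$. To avoid a name clash with the component $F$ in the claim, I write $T_0:=G-S$ for the spanning tree used to generate $S$.

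First I would argue that $F$ is a tree. Since every edge of $S$ is incident to a vertex of $W\subseteq U$, deleting $U$ from $G$ removes every edge of $S$, so $E(G-U)\subseteq E(G)\setminus S=E(T_0)$. Hence $G-U$ is a subgraph of the tree $T_0$ and therefore a forest, making each connected component $F$ a tree.

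Next I would show that every $x\in U$ has at most one neighbor in $F$. Suppose instead that some $x$ has two distinct neighbors $u,v\in V(F)$. Because $u,v\notin U$ and thus $u,v\notin W$, neither edge $xu$ nor $xv$ can lie in $S$ (whose endpoints are both in $W$), so both belong to $T_0$. The unique $(u,v)$-path $P$ in the tree $F$ lies inside $G-U\subseteq T_0$ as well. Concatenating $P$ with the edges $xu$ and $xv$ produces a cycle in $T_0$, contradicting that $T_0$ is a tree.

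Finally I would bound the set $N:=\{y\in U : y\text{ has a neighbor in }F\}$. Connectedness of $G$, together with $s\in U$, forces $|N|\ge 1$. For the upper bound, suppose $x,y,z\in U$ are three distinct vertices with respective (unique, by the previous step) neighbors $u,v,w\in V(F)$. Inside the tree $F$, the minimal subtree containing $\{u,v,w\}$ has a median vertex $c$ whose three paths to $u$, $v$, $w$ inside $F$ are pairwise internally vertex-disjoint (one or more of these paths may be trivial if $c$ coincides with one of $u,v,w$, and the argument survives unchanged even when two of $u,v,w$ coincide, which is the main technical subtlety). Extending these paths by the edges $ux$, $vy$, $wz$ produces three paths from $c$ to $U$ in $G$ whose internal vertices all lie in $V(F)\subseteq V(G)\setminus U$ and are pairwise disjoint. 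Thus $c\in V(G)\setminus U$ satisfies the enlargement rule, contradicting the termination of the construction of $U$. Hence $|N|\le 2$, which is exactly the dichotomy between cases (i) and (ii).
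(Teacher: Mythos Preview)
Your proof is correct and takes essentially the same approach as the paper's: both derive the first two properties from the fact that $G-S$ is a tree with all $S$-endpoints in $W\subseteq U$, and both handle the anchor bound by exhibiting a vertex of $F$ (your median $c$) admitting three internally vertex-disjoint paths to $U$, contradicting the termination of the construction of $U$. Your version simply spells out the cycle argument and the median construction where the paper is terse.
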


\begin{proof}[Proof of Claim~\ref{cl:U-comp}]
We have that each connected component $F$ of $G-U$ is a tree because $G-S$ is a tree and the endpoints of the edges of $S$ are in $U$. By the same reasons, each vertex $x\in U$ is adjacent to at most one vertex of each $F$. Consider arbitrary $F$. If only one vertex $x\in U$ has a neighbor in $F$, then (i) is fulfilled.  Suppose that at least two vertices of $U$ are adjacent to some vertices of $F$. Then  exactly two vertices $x,y\in U$ have neighbors in $F$, because otherwise $F$ would have a vertex $v$ which could be joined with $U$ by three vertex-disjoint paths but such a vertex would be included in $U$. This implies that (ii) holds and concludes the proof.  
\end{proof}
 
If $F$ is a connected component of $G-U$ satisfying (i) of Claim~\ref{cl:U-comp}, then we say that $F$ is a $x$-tree and $x$ is its \emph{anchor}.  
For a connected component $F$ of $G-U$ satisfying (ii), we say that $F$ is an \emph{$(x,y)$-tree} and call $x$ and $y$ \emph{anchors} of $F$.   We also say that $F$ is \emph{anchored} in $x$ ($x$ and $y$, respectively). Because $G-S$ is a tree and $|U|\leq 4k-1$ by Claim~\ref{cl:U-size}, we immediately obtain the next property.

\begin{claim}\label{cl:xy-trees}
For every distinct $x,y\in U$, $G-U$ has at most one $(x,y)$-tree. Furthermore, the graph $H$ with $V(H)=U$ such that $xy\in E(H)$ if and only if $G-U$ has an $(x,y)$-tree is a forest. In particular, 
the total number of $(x,y)$-trees is at most $4k-1$. 
\end{claim}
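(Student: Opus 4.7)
The plan is to exploit the fact that $F=G-S$ is a spanning tree of $G$ and that every edge of $S$ has both endpoints in $W\subseteq U$. A key preliminary observation, already implicit in the proof of Claim~\ref{cl:U-comp}, is that every edge of $G$ joining a vertex of $U$ to a vertex of a component $F'$ of $G-U$ must lie in $F$: if such an edge lay in $S$, then its endpoint in $V(F')$ would itself belong to $W\subseteq U$, contradicting $V(F')\cap U=\emptyset$. Consequently, if $F'$ is an $(x,y)$-tree, there is a unique edge $xu_{F'}\in E(F)$ and a unique edge $yw_{F'}\in E(F)$ with $u_{F'},w_{F'}\in V(F')$, and the unique $(u_{F'},w_{F'})$-path inside the subtree $F'$ extends to an $(x,y)$-path $Q_{F'}$ in $F$ whose internal vertices all lie in $V(F')$.

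First I would establish uniqueness: for distinct $x,y\in U$, there is at most one $(x,y)$-tree. Indeed, if $F_1'$ and $F_2'$ were two such trees, then $Q_{F_1'}$ and $Q_{F_2'}$ would be two distinct $(x,y)$-paths in $F$ whose interiors lie in the disjoint vertex sets $V(F_1')$ and $V(F_2')$. Their union would therefore contain a cycle in $F$, contradicting the fact that $F$ is a tree.

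Next I would show that $H$ is a forest using the same idea applied globally. Assume towards a contradiction that $H$ contains a cycle $x_1x_2\cdots x_\ell x_1$, and for $i=1,\dots,\ell$ (indices modulo $\ell$) let $F_i'$ be the $(x_i,x_{i+1})$-tree witnessing the edge $x_ix_{i+1}\in E(H)$. Concatenating the paths $Q_{F_1'},Q_{F_2'},\dots,Q_{F_\ell'}$ yields a closed walk in $F$ passing through $x_1,x_2,\dots,x_\ell,x_1$. Because the components $F_1',\dots,F_\ell'$ of $G-U$ have pairwise disjoint vertex sets and the only $U$-vertices met by each $Q_{F_i'}$ are its endpoints, this closed walk is actually a simple cycle in $F$, once more contradicting that $F$ is a tree.

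Finally, since $H$ is a forest on the vertex set $U$, it has at most $|U|-1$ edges, which by Claim~\ref{cl:U-size} is at most $4k-1$. The uniqueness step shows that the number of $(x,y)$-trees in $G-U$ equals $|E(H)|$, giving the claimed bound. The only subtle point in the whole argument—and the step I expect to be the main obstacle to get right—is the preliminary observation that every edge from $U$ into a component of $G-U$ is a tree edge of $F$; once this is locked in, the rest reduces cleanly to the acyclicity of the spanning tree.
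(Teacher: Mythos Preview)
Your proof is correct and follows precisely the idea the paper invokes in its one-line justification (``Because $G-S$ is a tree and $|U|\leq 4k$ by Claim~\ref{cl:U-size}, we immediately obtain the next property''): you simply spell out why the acyclicity of the spanning tree $F=G-S$ forces uniqueness of $(x,y)$-trees and acyclicity of $H$. The preliminary observation that every $U$-to-component edge lies in $F$ (since edges of $S$ have both endpoints in $W\subseteq U$) is exactly what makes the reduction to tree-paths work, and your uniqueness and cycle arguments are clean.
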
 
 
 To prove the theorem, we have to verify the existence of a broadcasting protocol $P=(T,\{C(v)\mid v\in V(T)\})$ that ensures that every vertex receives the message after at most $t$ rounds. To do it, we guess the \emph{scheme} of $P$ restricted to $U$. Namely, we consider the graph $G'$ obtained from $G$ by the deletion of the vertices of $x$-trees for all $x\in U$ and for each vertex $x\in U$, we guess how the message is broadcasted to $x$ and from $x$ to the neighbors of $x$ in $G'$. Notice that $T'=T[V(G')]$ is a tree by the definition of $G'$. Observe also that for each $x$-tree $F$ for $x\in U$, the message is broadcasted  to the vertices of $F$ from $x$, because $s\in U$. In particular, this means that the parents of the vertices of $U$ in $T$ are in $G'$. 
 For each $v\in U$ distinct from $s$, we guess its parent $p(v)\in V(G')$ in $T'$ and assume that $p(s)=s$. Then for each $v\in V(G)$, we guess the ordered subset $R(v)$ of vertices  of $N_{G'}(v)\setminus\{p(v)\}$ such that 
 $R(v)=C(v)\cap N_{G'}(v)$.  We guess $p(v)$ and $R(v)$ for $v\in U$ by considering all possible choices.  To guess $R(v)$ for each $v\in U$, we first guess the (unordered) set $S(v)$ and then consider all possible orderings of the elements of $S(v)$.
  The selection of $p(v)$ and $S(v)$ is done by brute force. However,  we are only interested in choices, where the selection of the neighbors $p(v)$ and $S(v)$ of $v$ for $v\in U$ can be extended to a spanning tree $T'$ of $G'$.  
 
 Let $T'$ be  an arbitrary spanning tree of $G'$ rooted in $s$. Let $T''$ be the tree obtained from $T'$ by the iterative deletion of leaves not included in $U$. Observe that $T''$ is a tree such that $U\subseteq V(T'')$ and each leaf of $T''$ is a vertex of $U$. By Claim~\ref{cl:U-comp}, each edge of $T''$ is either an edge of $G[U]$ or is an edge of an $(x,y)$-path $Q$ for distinct $x,y\in U$ such that the internal vertices of $Q$ are the vertices of the $(x,y)$-tree $F$; in the second case, each edge of $Q$ is in $T''$. Notice also that $s\in V(T'')$ and for each $v\in U$ distinct from $s$, the parent of $v$ in $T$ is the parent of $v$ in $T''$ with respect to the source vertex $s$. Hence, our first step in constructing $p(v)$ and $S(v)$, is to consider all possible choices of $T''$. Observe that $G[U]$ has at most $\binom{4k}{2}$ edges by  Claim~\ref{cl:U-size} and the total number of $(x,y)$-trees is at most $4k-1$ by Claim~\ref{cl:xy-trees}. Because $T''$ is a tree, it contains $|U|-1$ edges of $G[U]$ and $(x,y)$-paths $Q$ in total. We obtain that we have $k^{\Oh(k)}$  possibilities to choose $T''$. From now, we assume that $T''$ is fixed. 
 
 The choice of $T''$ defines $p(v)$ for $v\in U\setminus \{s\}$. For each $v\in U$, we initiate the construction of $S(v)$ by including in the set the neighbors of $v$ in $T''$ distinct from $p(v)$. We proceed with  guessing of $S(v)$ by considering $(x,y)$-trees $F$ for $x,y\in U$ such that the $(x,y)$-path with the internal vertices in $F$ is not included in $T''$. Clearly, the vertices of every $F$ of such a type should receive the message either via $x$, or via $y$, or via both $x$ and $y$. Let $F$ be an $(x,y)$-tree of this type. Denote by $x'$ and $y'$ the neighbors of $x$ and $y$, respectively. We have that either $x'\in S(x)$ and $y'\notin S(y)$, or $x\notin S(x)$ and $y\in S(Y)$, or  $x'\in S(x)$, $y'\in S(y)$ and $x'\neq y'$. Thus, we have three choices for $F$. By Claim~\ref{cl:xy-trees}, the total number of choices is $2^{\Oh(k)}$. We go over all the choices and include the vertices to the sets $S(v)$ for $v\in U$ with respect to them. By Claim~\ref{cl:U-comp}, this concludes the construction of the sets $S(v)$. From now, we assume that $S(v)$ for $v\in U$ are fixed. 
 
 We construct the ordered sets $R(v)$ by considering all possible orderings of the elements of $S(v)$. The number of these orderings is $\Pi_{v\in U}(|S(v)|!)$. Recall that the sets $S(v)$ are sets of neighbors of $v$ in a spanning tree of $G'$. This and Claims~\ref{cl:U-comp} and \ref{cl:xy-trees} imply that $\sum_{v\in U}|S(v)|\leq 2(|U|-1)+2(4k-1)\leq 16k$. Therefore, the total number of orderings is
   $\Pi_{v\in U}(|S(v)|!)=k^{\Oh(k)}$. This completes the construction of $R(v)$. Now we can assume that $p(v)$ for each $v\in U\setminus \{s\}$ and $R(v)$ for each $v\in U$ are given.
     
 The final part of our algorithm is checking whether the  guessed scheme for a broadcasting protocol can be extended to the protocol itself.  This is done in two stages.
 
 In the first stage, we compute for each $v\in U$, the minimum number $r(v)$ of a round in which $v$ can receive the message and the ordered set $C(v)$.  Initially, we set $r(s)=0$ and set $X:=\{s\}$. Then we iteratively either compute $C(v)$ for $v\in X$ or extend $X$ by including a new vertex $v\in U\setminus X$ and computing $r(v)$. We proceed until we get $X=U$ and compute $C(v)$ for     
  every $v\in U$. We also stop and discard the current choice of the scheme if we conclude that the choice cannot be extended to a broadcasting protocol terminating in at most $t$ steps.
  
 Suppose that there is $v\in X$ such that $C(v)$ is not constructed yet. Notice that $r(v)$ is already computed. To construct $C(v)$, we observe that for each $v$-tree $F$ anchored in $v$, the vertices of $F$ should receive the message via $v$. Hence, to construct $C(v)$, we extend $R(v)$ by inserting the neighbors of $v$ in the $v$-trees. 
 If there is no $v$-tree anchored in $v$, then we simply set $C(v)=R(v)$. Assume that this is not the case and let 
 $F_1,\ldots,F_k$ be the $v$-trees anchored in $v$. Denote by $u_1,\ldots,u_k$ the neighbors of $v$ in $T_1,\ldots,T_k$, respectively. Because the message is broadcasted from $u$ to each $u_i$, we can assume that to broadcast the message from $u_i$ to the other vertices of $T_i$, an optimal protocol requiring $b(T_i,u_i)$ rounds is used. We compute the values $b(T_i,u_i)$ for all $i\in\{1,\ldots,k\}$ and assume that $b(T_1,u_1)\geq\dots\geq b(T_k,u_k)$. 
 
 If $r(v)+|R(v)|+k>t$, we discard the current choice of the scheme, because we cannot transmit the message to the neighbors of $v$ in $t$ rounds.  Notice also that 
 $r(v)+\max\{b(T_i,u_i)+i\mid i\in\{1,\ldots,k\}\}$  rounds are needed to transmit the message to the vertices of all $v$-trees. Hence, if  
 $r(v)+\max\{b(T_i,u_i)+i\mid i\in \{1,\ldots,k\}\}>t$, we discard the considered scheme. From now on, we assume that $r(v)+|R(v)|+k\leq t$ and 
 $r(v)+\max\{b(T_i,u_i)+i\mid i\{1\in,\ldots,k\}\}\leq t$.
 
 The main idea for constructing $C(v)$ is to ensure that the message is sent to the vertices of $R(v)$ as early as possible. To achieve this, we put $u_1,\ldots,u_k$ in $C(v)$ in such a way, that the message is sent to each $u_i$ as late as possible. 
 Since $|C(v)|=|R(v)|+k$, we represent $C(v)$ as an $|R(v)|+k$-element array whose elements are indexed $1,2,\ldots,|R(v)|+k$. 
 Because  $b(T_1,u_1)\geq\dots\geq b(T_k,u_k)$, we can assume that the ordering of the vertices $u_1,\ldots,u_k$ in $C(v)$ is $(u_1,\ldots,u_k)$. Therefore, we insert $u_i$ in $C(v)$ consecutively for $i=k,k-1,\ldots,1$.  
 
 Suppose that $i\in\{1,\ldots,k\}$ and $u_{i+1},\ldots,u_k$ are in $C(v)$. Denote by $h_{i+1}$ the index of $u_{i+1}$ assuming that $h_{k+1}=|R(v)|+k+1$. We find maximum positive integer $h<h_{i+1}$ such that $r(v)+h+b(T_i,u_i)\leq t$ and set the index $h_i=h$ for $u_i$. In words, we find the maximum index that is prior to the index of $u_{i+1}$ such that if we transmit the message from $v$ to $u_i$ in the $h$-th round after $v$ got aware of the message, then the vertices of $T_i$ still may get the message in $t$ rounds. 
 After placing $u_1,\ldots,u_k$ into the array, we place the vertices of $R(v)$ in the remaining $|R(v)|$ places following the order in $R(v)$. This  completes the construction of $C(v)$.
 
 Suppose that $U\setminus X\neq\emptyset$ and for each $v\in X$, $C(v)$ is given. We assume that for each $v\in X$, the elements of $C(v)$ are indexed $1,
 \ldots,|C(v)|$ according to the order.  By the constriction of the schemes, there is $y\in U\setminus X$ such that $y$ receives the message from some vertex $x\in X$ either directly or via some $(x,y)$-tree $F$ anchored in $x$ and $y$. We find such a vertex $y$, compute $r(y)$, and include $y$ in $X$. 
 
 If there is $y\in U\setminus X$ such that $p(y)=x\in X$, then we set $r(y)=r(x)+h$, where $h$ is the index of $y$ in $C(v)$ and set $X:=X\cup\{y\}$. Since $r(x)$ is the minimum number of a round when $x$ gets the message, $r(y)$ is the minimum number of a round in which $y$ gets the message. Suppose that such a vertex $y$ does not exist. Then by the construction of the considered scheme, there are $x\in X$ and $y\in U\setminus X$ such that the tree $T''$ which was used to construct the scheme contains an $(x,y)$-path whose internal vertices are in the $(x,y)$-tree $F$ anchored in $x$ and $y$. This means that the neighbor $x'$ of $x$ in $F$ is included in $C(v)$. Let $h$ be the index of $x'$ in $C(v)$. We also have that $y'=p(y)$ is the unique neighbor of $y$ in $F$. In other words, we have to transmit the message from $x$ to $y$ according to the scheme.  To compute $r(y)$, we have to transmit the message as fast as possible. For this, we use Lemma~\ref{lem:bxy}. Notice that the vertices of $F$ should receive the message in at most $t'=t-r(x)-h+1$ rounds because $x$ receives the message in the round $r(v)$ and $h-1$ vertices of $C(v)$ get the message before $x'$. 
 Let $F'$ be the tree obtained from $F$ by adding the vertices $x,y$ and the edges $xx',yy'$. We compute $b_{t'}(F',x,y)$ using the algorithm from  Lemma~\ref{lem:bxy}.  
 If $b_{t'}(F',x,y)=+\infty$, we discard the considered scheme because $y$ cannot receive the message in $t$ rounds. Otherwise, we set $r(y)=r(x)+(h-1)+b_{t'}(F',x,y)$ and set $X:=X\cup\{y\}$.
 
 This completes the first stage where we compute $r(v)$ and $C(v)$ for $v\in U$. Observe that if we completed this stage without discarding the considered choice of the scheme, we already have a partially constructed broadcasting protocol that ensures that (i) the vertices of $v$-trees for $v\in U$ get the message in at most $t$ rounds, (ii) the vertices of $(x,y)$-trees that are assigned by the scheme to transmit  the message from $x$ to $y$ receive the message in at most $t$ rounds, and (iii) for each $v\in U$, $r(v)$ is the minimum number of a round when $v$ can receive the message according to the scheme. By Claim~\ref{cl:U-comp}, it remains to check whether the vertices of $(x,y)$-trees $F$ that are not assigned by the scheme to transmit the message from $x$ to $y$ or vice versa can receive the message in at most $t$ rounds. We do it using Lemmas~\ref{lem:trees} and~\ref{lem:dxy}.
 
 Suppose that $F$ is a $(x,y)$-tree anchored in $x,y\in U$ such that $p(x),p(y)\notin V(F)$, that is, $F$ is not assigned by the scheme to transmit the message from $x$ to $y$ or vice versa. 
 Let $x'$ and $y'$ be the neighbors in $F$ of $x$ and $y$, respectively. By the construction of the schemes, we have three cases: (i) $x'\in C(x)$ and $y'\notin C(y)$, (ii) $x'\notin C(x)$, $y'\in C(y)$, 
 and (iii) $x'\in C(x)$, $y'\in C(y)$, and $x'\neq y'$.
 
 In case (i), the vertices of $F$ should receive the message via $x$. Clearly, we can use an optimal protocol for $F$ with the source $x'$ to transmit the message from $x'$. 
  Let $h$ be the index of $x'$ in $C(x)$. We use Lemma~\ref{lem:trees}, to verify whether $t-r(x)-h\geq b(F,x)$. If the inequality holds, we conclude that the message can be transmitted to the vertices of $F$ in at most $t$ rounds. Otherwise, we conclude that this is impossible and discard the scheme. Case (ii) is symmetric and the arguments are the same.
  
  Suppose that $x'\in C(x)$, $y'\in C(y)$, and $x'\neq y'$. Then the vertices of $F$ are receiving the message from both $x$ and $y$. Denote by $i$ and $j$ the indexes of $x'$ and $y'$ in $C(x)$ and $C(y)$, respectively. By symmetry, we assume without loss of generality that $r(x)+i\leq r(y)+j$ and let $h=(r(y)+j)-(r(x)+i)$. Notice that the vertices of $F$ start to get the message after the round $r(v)+i-1$. Denote by $F'$ the tree obtained from $F$ by adding the vertices $x,y$ and the edges $xx',yy'$.
  We use Lemma~\ref{lem:dxy} and compute $d_h(F',x,y)$. If  $d_h(F',x,y)\leq t-r(v)-i+1$, then we obtain that the message can be transmitted to the vertices of $F$ in at most $t$ rounds. Otherwise, we cannot do it and discard the scheme.   
  
 This completes the description of the second stage of the verification of whether the considered scheme can be extended to a broadcasting protocol terminating in at most $t$ rounds.
 
 If we find a scheme that allows us to conclude that the message can be broadcasted in at most $t$ rounds, we conclude that $(G,s,t)$ is a yes-instance. Otherwise, if every scheme gets discarded, we return that $(G,s,t)$ is a no-instance of \probTB. This concludes the description of the algorithm.
 x
 To argue that the algorithm is correct, note that if there is a scheme for which we report that $(G,s,t)$ is a yes-instance, then the description of the verification procedure of the extendability of the scheme implies the existence of a broadcasting protocol terminating in at most $t$ rounds. For the opposite direction, assume that $(G,s,t)$ is a yes-instance of \probTB. Then there is a broadcasting protocol $P$ terminating in at most $t$ rounds. For $v\in V(G)$, denote by $r_P(v)$ the number of the round on which $v$ gets the message. Among all  $P=(T,\{C'(x)\mid x\in V(T)\})$ terminating in $t$ rounds, we select $P$ to be protocol such that $\sum_{v\in U}r_P(v)$ is minimum. We construct $G'$ as described in the algorithm and consider $T'=T[V(G')]$. Then we construct $T''$ from $T'$ by the iterative deletion of leaves that are not included in $U$. Then we define the scheme by setting $p(v)$ to be parent of each $v\in U$ in $T''$ with respect to the root $s$ ($p(s)=s$) and defining $R(v)=C'(v)\cap N_{T''}(v)$ for $v\in U$. Observe that the obtained scheme is one of the schemes constructed by the algorithm. Then the description of the algorithm implies that the verification step of the algorithm returns the yes-answer for the scheme. In particular, we get that for every $v\in U$, $r(v)=r_P(v)$ and the indices of the elements of  $R(v)$ in the set $C(v)$ obtained by the algorithm are the same as in $C'(v)$. 
 
To evaluate the running time, observe that the feedback edge set $S$ and the set of vertices $U$ are constructed in polynomial time. Given $U$, we can list all $x$-trees and $(x,y)$-trees for $x,y\in U$   
 in polynomial time. Then the schemes can be constructed in $k^{\Oh(k)}$ time. For each scheme, the verification step requires polynomial time as we use the polynomial algorithms from Lemmas~\ref{lem:trees}, \ref{lem:bxy}, and~\ref{lem:dxy}. We conclude that the total running time is $2^{\Oh(k\log k)}\cdot n^{\Oh(1)}$. This completes the proof.  
  
\section{\probTB parameterized by the vertex cover number}\label{sec:vc}
In this section, we prove Theorem~\ref{thm:vc}. Recall that we aim to show that \probTB is \classFPT on graphs with the vertex cover number at most $k$ when the problem is parameterized by $k$. We start with some auxiliary claims about the broadcasting on a graph with a given vertex cover $S$.

\begin{lemma}\label{lem:vc}
Let $G$ be a graph with at least one edge and $s\in V(G)$. Let also $S$ be a vertex cover of $G$. Then there is an optimal broadcasting protocol for $G$ with the source $s$ such that the vertices of $S$ receive the message in at most $2|S|-1$ rounds.
\end{lemma}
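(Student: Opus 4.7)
My plan is to combine a short structural observation about vertex covers with an exchange argument applied to an extremal optimal protocol.

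\emph{Structural observation.} Because $I := V(G)\setminus S$ is an independent set, for every $A\subseteq V(G)$ with $A\cap S\neq S$ there exist $u\in A$ and $w\in S\setminus A$ with $\dist_G(u,w)\leq 2$. To see this, let $w\in S\setminus A$ minimize $\dist_G(A,w)$ and consider a shortest $A$-to-$w$ path $u_0u_1\cdots u_\ell$ with $u_0\in A$ and $u_1,\dots,u_\ell\notin A$. If $\ell\geq 3$, then $u_1\notin S$ (else $u_1$ would be a closer element of $S\setminus A$), so $u_1\in I$; but then $u_2\in S$ because the edge $u_1u_2$ must have an endpoint in $S$, and $u_2\in S\setminus A$ at distance $2$ from $A$ contradicts minimality. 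Hence $\ell\in\{1,2\}$.

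\emph{Counting consequence.} Iterating this observation along the execution of a protocol shows that, starting from the initial set $\{s\}$, there is always a way to enlarge $A\cap S$ by at least one every two rounds, and by one already in round $1$ when $s\in I$ (since every neighbor of $s$ then lies in $S$). Starting from either $|A_0\cap S|=1$ (when $s\in S$) or $|A_1\cap S|\geq 1$ (when $s\in I$), an inductive count gives $|A_{2|S|-1}\cap S|\geq |S|$, so there exists a (possibly non-optimal) broadcasting protocol that informs every vertex of $S$ within $2|S|-1$ rounds.

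\emph{Exchange to an optimal protocol.} To upgrade the above to an optimal protocol I would pick an optimal $P$ lexicographically minimizing the multiset $\{r_P(v):v\in S\}$ sorted non-increasingly. Assume for contradiction that $r^* := \max_{v\in S} r_P(v) > 2|S|-1$. At most $|S|$ of the rounds $\{1,\dots,r^*\}$ can be $S$-productive (introduce a new vertex of $S$), so strictly more than $|S|-1$ of them are not. Pick a non-$S$-productive round $i$ with $A_{i-1}\cap S\neq S$ and apply the structural observation to $A_{i-1}$ to obtain $u\in A_{i-1}$ and $w\in S\setminus A_{i-1}$ with $\dist_G(u,w)\leq 2$. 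Since round $i$ is non-$S$-productive, $u$ must lie in $S$ (otherwise $u\in I$ would only transmit to $S$), and is either idle or sending to some $x\in I$ at round $i$. A swap that uses $u$ at round $i$ (and $i+1$ in the distance-$2$ case, via an intermediate $I$-vertex) to inform $w$ instead, while deferring $x$ to a later slot in $u$'s ordered child list, should yield an optimal protocol with a lexicographically smaller $S$-profile, contradicting the choice of $P$.

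\emph{Main obstacle.} The delicate step is certifying that the swap does not increase the overall broadcast time. The point I would exploit is that $x\in I$ has no neighbor in $I$, so all children of $x$ in the broadcast tree lie in $S$; by the lexicographic minimality of $P$, those $S$-children (and, recursively, their relevant descendants) have enough slack before $r^*$ to absorb the one-slot delay imposed on $x$. Making this slot-by-slot bookkeeping precise, and handling corner cases where $u$'s previously scheduled transmissions have to be reordered in concert with the swap, is the technically demanding part of the proof.
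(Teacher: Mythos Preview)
Your structural observation and the counting that yield \emph{some} protocol informing $S$ within $2|S|-1$ rounds are correct, but the exchange step that is supposed to upgrade this to an optimal protocol has a real gap which your ``main obstacle'' paragraph names without closing. When $u\in S$ sends to $x\in I$ at round~$i$ and you insert $w$ (or the intermediate $I$-vertex in the distance-$2$ case) before $x$ in $C(u)$, the whole subtree of $T$ rooted at $x$ is delayed by one round. Two things can go wrong. First, the vertex realising $b(G,s)$ in $P$ may lie in $x$'s subtree and may lie in~$I$ (e.g.\ $x$'s subtree can be a long $S/I$-alternating path ending in~$I$); then the swapped protocol has broadcast time $b(G,s)+1$ and is no longer optimal, so lexicographic minimality among \emph{optimal} protocols yields nothing. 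Second, even restricting attention to~$S$, an $S$-descendant of $x$ may already have $r_P$-value equal to~$r^*$; the swap pushes it to $r^*+1$, so the new $S$-profile is lex-\emph{larger}, not smaller. Your slack heuristic (``those $S$-children have enough slack before $r^*$'') is not a consequence of the lexicographic minimality of~$P$ and fails in precisely these situations. (There is also a small slip in ``$u$ must lie in~$S$'': if $u\in I$ is idle at round~$i$ nothing forces $u\in S$; that case is easy to dispatch by a safe reattachment of~$w$ to~$u$, but it should be said.)

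The paper sidesteps all of this by first normalising an arbitrary optimal protocol with two transformations that are trivially optimality-preserving: make it \emph{greedy} (no informed vertex is idle while it has an uninformed $G$-neighbour), and in every $C(v)$ push the leaf children to the end with $S$-leaves preceding $I$-leaves. A pigeonhole count then produces two \emph{consecutive} non-$S$-productive rounds $h,h{+}1$; greediness forces every informed $I$-vertex to have all its $G$-neighbours already informed after round~$h$, so the vertices newly informed at round~$h$ are $I$-leaves of~$T$, while the frontier vertex in $S$ on the tree path to the uninformed part of~$S$ still has an uninformed child---contradicting the leaf ordering. No swaps and no bookkeeping about the global maximum are needed, which is exactly what your approach is missing.
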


\begin{proof}
Let $(T,\{C(v)\mid v\in V(T)\})$ be an optimal broadcasting protocol for $G$ with the source $s$. 
Observe that we can assume without loss of generality that the protocol is \emph{greedy}, meaning that for each round $i$, if it holds that $v$ is a vertex that is aware of the message after $i$ rounds, it would always send the message in the next round to some neighbor unless all the neighbors either got the message in the first $i$ rounds or will receive the message in the $(i+1)$-th round from vertices that are distinct from $v$. In words, a vertex does not stay idle if it can communicate the message to a new recipient. Furthermore, we can assume that 
for each $v\in V(T)$, the children of $v$ of degree one, i.e., leaves of $T$, are at the end of the ordered set $C(v)$. Otherwise, we can rearrange each $C(v)$ to achieve this property and the obtained protocol would be optimal. Also, we can include the leaf children of each $v$ in $C(v)$ in arbitrarily order because any rearrangement of these vertices does not change the number of rounds. Hence, we assume that $(T,\{C(v)\mid v\in V(T)\})$ is a greedy  optimal broadcasting protocol such that
for each $v\in V(T)$, (i) the children of $v$ of degree one are at the end of the ordered set $C(v)$ and (ii) the children of degree one that are in $S$ are before  the children of degree one in $I=V(G)\setminus S$. We claim that every vertex of $S$ gets the message in the first $2|S|-1$ round.

For the sake of contradiction, assume that this is not the case. Then there is a nonnegative integer $h\leq 2|S|-2$ such that neither in round $h$ nor in round $h+1$, there is a vertex of $S$ that receives the message. Let $S'\subseteq S$ be the set of vertices of $S$ that got the message after $h-1$ rounds and let $S''=S\setminus S'$. Let also $I'\subseteq I$ be the set of vertices of $I$ that got the message after $h-1$ rounds and denote by $I''\subseteq I$ the set of vertices of $I$ that get the message in the $h$-th round. Because no vertex of $S$ gets the message in the $h$-th round, we have that $I''\neq\emptyset$. Observe that no vertex of $I'\cup I''$ is adjacent to a vertex of $S''$ as otherwise at least one vertex of $S'$ would get the message in the $(h+1)$-th round. In particular, this means that the vertices of $I''$ are leaves of $T$.
Because $G$ is connected and $T$ is a spanning tree, we have that there is a vertex $v\in S'$ which is adjacent to a vertex $u\in S''$ in $T$. Since the protocol is greedy, $v$ sent the message to a vertex of $I''$ in the $h$-th round. However, this contradicts that the protocol satisfies (i) and (ii) for $v$. If $u$ is not a leaf of $T$, then $v$ should send the message to $u$ instead of a vertex of $I''$ by (i), and if $u$ is a leaf, then $u$ is a leaf in $S$ and $v$ should send the message to $u$ in the $h$-th round by (ii).  The obtained contradiction proves the lemma.
\end{proof}

We also use the bound for the number of vertices of  $I=V(G)\setminus S$ getting the message in the first $p$ rounds.

\begin{lemma}\label{lem:is}
Let $G$ be a graph with at least one edge and $s\in V(G)$. Let also $S$ be a vertex cover of $G$ and $p\geq 1$ be an integer. Then for any  broadcasting protocol for $G$ with the source $s$, 
at most $p|S|$ vertices of $I=V(G)\setminus S$ receive the message in  the first $p$ rounds.
\end{lemma}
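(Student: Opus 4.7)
The plan is to give a straightforward double counting argument based on the fact that $I = V(G) \setminus S$ is an independent set. This is the key structural observation enabling everything: since $S$ is a vertex cover, no edge of $G$ has both endpoints in $I$, so every neighbor of a vertex of $I$ lies in $S$.

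First, I would note that every vertex $v \in I$ that receives the message in some round $1,\ldots,p$ must receive it from one of its neighbors in the underlying broadcasting protocol, and all such neighbors lie in $S$. (The source $s$ is \emph{not} counted as receiving the message in any round, so even if $s \in I$ this causes no issue.) Hence every reception of the message by a vertex of $I$ in the first $p$ rounds corresponds to a transmission performed by some vertex of $S$ during those rounds.

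Next, I would use the definition of the telephone model: at each round, each informed vertex transmits the message to at most one neighbor. Therefore any fixed vertex $u \in S$ performs at most one transmission per round, and hence at most $p$ transmissions in total during the first $p$ rounds. Summing over all $u \in S$ gives at most $p|S|$ transmissions originating from $S$ during those rounds, and in particular at most $p|S|$ of these transmissions can be directed to vertices of $I$.

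Since distinct vertices of $I$ become informed through distinct transmissions (each vertex is informed exactly once), we conclude that at most $p|S|$ vertices of $I$ receive the message during the first $p$ rounds. There is no serious obstacle here; the lemma is essentially an immediate consequence of the independence of $I$ combined with the one-neighbor-per-round constraint of the telephone model, and the proof is only a couple of sentences long.
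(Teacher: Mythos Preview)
Your proof is correct and follows essentially the same approach as the paper: both rely on the fact that $I$ is independent, so every vertex of $I$ (other than possibly $s$) receives the message from $S$, and each vertex of $S$ transmits at most once per round. The only cosmetic difference is that the paper splits into cases $s\in S$ and $s\in I$ (in the latter case bounding each $S$-vertex by $p-1$ transmissions and adding $1$ for $s$), whereas you avoid the case split by observing that $s$ never ``receives'' the message and using the uniform bound of $p$ transmissions per $S$-vertex.
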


\begin{proof}  
Note that each vertex $v\neq s$ in $I$ can receive the message only from its neighbor in $S$. If $s\in S$, then each vertex of $S$ can send the message to at most $p$ neighbors in $p$ round. Therefore, at most $p|S|$ can get the message in $p$ rounds. If $s\in I$, then each vertex of $S$ can send the message to at most $p-1$ neighbors. Hence, at most $(p-1)|S|+1\leq p|S|$ vertices of $I$ can receive the message in $p$ rounds. 
\end{proof}

Now we are ready to prove Theorem~\ref{thm:vc}.

\subsection*{Proof of Theorem~\ref{thm:vc}}
Let $(G,s,t)$ be an instance of \probTB and let $k\geq 0$ be an integer. \probTB is trivial for $k=0$ as $G$ has the empty vertex cover if and only if $G$ has a single vertex (recall that $G$ is connected by our assumption). Also, the problem is trivial if $G$ has no edges. 
Hence, we assume that $G$ has at least two vertices and $k\geq 1$. 

We use the algorithm of Chen, Kanj, and Xia~\cite{ChenKX10}  to find  in $1.2738^k\cdot n^{\Oh(1)}$ time a vertex cover $S$ of $G$ of size at most $k$. If the algorithm fails to find such a set, then we stop and return the answer that $G$ has no vertex cover of size at most $k$. From now on, we assume that $S$ is given and $I=V(G)\setminus S$. If $|I|\leq 2k^2$, then $|V(G)|\leq 2k^2+k$ and we solve the problem in $2^{\Oh(k^2)}\cdot k^{\Oh(1)}$ time using Theorem~\ref{thm:exact}.  So we can assume that $|I|>2k^2$.

We compute the partition of $\{L_1,\ldots,L_p\}$ of $I$ into classes of false twins, that is, two vertices $u$ and $v$ are in the same set $L_i$ if and only if they have exactly the same neighbors in $S$.  Since $|S|\leq k$, we have that $p\leq 2^k$. Observe also that the partition can be constructed in linear time (see~\cite{TedderCHP08}). As it is common for the parameterization by the vertex cover number, we exploit the property that two vertices $u,v\in L_i$ for some $i\in\{1,\ldots,p\}$ that are distinct from $s$ are \emph{indistinguishable}. In particular, given a broadcasting protocol $(T,\{C(v)\mid v\in V(T)\})$, we can exchange $u$ and $v$, that is, replace the former $u$ by $v$ and former $v$ by $u$ in $T$ and the ordered sets,  and this would keep the number of rounds the same.

By Lemma~\ref{lem:is}, at most $2k^2$ vertices of $I$ can receive the message in the first $2k$ rounds. Hence, if $t\leq 2k^2<|I|$, we conclude that $(G,s,t)$ is a no-instance. From now on we can assume that $t>2k^2$. 

By Lemma~\ref{lem:vc}, there is an optimal broadcasting protocol such that all the vertices of $S$ receive the message in the first $2k-1$ rounds. Combining this with Lemma~\ref{lem:is}, we obtain that there is an optimal broadcasting protocol such that the set of vertices $X$ that receive the message in the first $2k$ rounds satisfies the properties (i) $S\subseteq X$ and (ii)
for $Y=S\cap I$, $|Y|\leq 2k^2$. 

We guess $Y$ using the fact that the vertices of the same set $L_i$ are indistinguishable.  We consider all possibilities to choose $p$ nonnegative integers $\ell_1,\ldots,\ell_p$ such that 
(a) $\ell_i\leq |L_i|$ for each $i\in\{1,\ldots,p\}$, (b) if $s\in L_i$ for some $i\in\{1,\ldots,p\}$, then $\ell_i\geq 1$,  and (c) $\ell_1+\dots+\ell_p\leq 2k^2$. Notice that because $p\leq 2^k$, there are 
$2^{\Oh(k^3)}$ possibilities to choose $\ell_1,\ldots,\ell_p$. Given $\ell_1,\ldots,\ell_p$, for each $i\in\{1,\ldots,p\}$, we can select $\ell_i$ vertices from $L_i$ in such a way that if $s\in L_i$, then $s$ is selected, and the other vertices are taken arbitrariy, and then denote the set of selected vertices by $Y_i$. We have that if there is   
 an optimal broadcasting protocol such that the set of vertices $X$ that receive the message in the first $2k$ rounds satisfying (i) and (ii), then there is an optimal broadcasting protocol such that the set of vertices $X$ that receive the message in the first $2k$ rounds is $S\cup(Y_1\cup\dots\cup Y_p)$. We verify whether the choice of $X$ is \emph{feasible}, that is, the vertices of $X$ can get messages in $2k$ rounds, using Theorem~\ref{thm:exact}.
For this we check whether $G[X]$ is connected and if this holds, then run the algorithm from the theorem for $G[X]$, $s$, and $t=2k$. We discard the choice of $X$ if $G[X]$ is disconnected or 
if the algorithm reports that the considered instance is a no-instance.  
Observe that the running time of the algorithm is $2^{\Oh(k^2)}$ in this case. 
 
Notice that if the vertices of $Y_i$ are getting the message in the first $2k$ rounds for some $i\in\{1,\ldots,p\}$, then the remaining $|L_i|-\ell_i$ vertices should receive the message from their neighbors in $S$ in the following $t-2k$ rounds. Notice that because each vertex of $S$ is aware of the message after $2k$ rounds in a hypothetical  solution, any vertex of $S$ can send the message to any neighbor in $I$ and the vertices of $S$ do not send the message to $S$. This allows us to encode a broadcasting protocol for the final steps as a system of linear inequalities over $\mathbb{Z}$. 

For every $v\in S$ and every $i\in\{1,\ldots,p\}$, we introduce an integer-valued variable $x_{vi}$ meaning that exactly $x_{vi}$ neighbors of $v$ in $L_i\setminus Y_i$ receive the message from $v$. We have the following straightforward constraints:
\begin{equation}\label{eq:one}
x_{vi}\geq 0\text{ for all }v\in S\text{ and }i\in\{1,\ldots,p\}
 \end{equation}
 and 
 \begin{equation}\label{eq:two}
x_{vi}= 0\text{ for all }v\in S\text{ and }i\in\{1,\ldots,p\}\text{ s.t. }v\text{ is not adjacent to the vertices of }L_i.
 \end{equation}
 To encode that all vertices receive the message, we impose the following constraint:
 \begin{equation}\label{eq:three}
 \sum_{v\in S}x_{vi}=|L_i|-\ell_i\text{ for every }i\in\{1,\ldots,p\}.
 \end{equation}
 Finally, we encode the property that the number of rounds should be bounded by $t-2k$:
  \begin{equation}\label{eq:four}
 \sum_{i=1}^px_{vi}\leq t-2k\text{ for every }v\in S.
 \end{equation} 
 
 Observe that the number of variables in the system (\ref{eq:one})--(\ref{eq:four}) is at most $k+p\leq k+2^k$. Therefore, we can solve it in $2^{\Oh(k2^k)}\cdot n^{\Oh(1)}$ time by Lemma~\ref{lem:ILP}. If the algorithm from Lemma~\ref{lem:ILP} reports that the system   (\ref{eq:one})--(\ref{eq:four}) has a solution, we conclude that $(G,s,t)$ is a yes-instance and stop. If the system has no solution for all choices of $\ell_1,\ldots,\ell$ and corresponding feasible sets $X$, we report that $(G,s,t)$ is a no-instance.

To show correctness, we follow the description of the algorithm. We can assume that $(G,s,t)$ is an instance of \probTB with $k\geq 1$, $t>2k^2$, $|V(G)|\geq 2$, and 
$|I|>2k^2$ for $I=V(G)\setminus S$, because otherwise correctness follows directly from the description. 

Suppose that $(G,s,t)$ is a yes-instance. By Lemmas~\ref{lem:vc} and \ref{lem:is}, $G$ and $s$ have an optimal broadcasting protocol   such that the set of vertices $X$ that receive the message in the first $2k$ rounds satisfies the properties (i) $S\subseteq X$ and (ii) for $Y=S\cap I$, $|Y|\leq 2k^2$. For each $i\in\{1,\ldots,p\}$, let $\ell_i=L_i\cap Y$. We have that 
$\ell_1,\ldots,\ell_p$ are nonnegative integers such that
(a) $\ell_i\leq |L_i|$ for each $i\in\{1,\ldots,p\}$, (b) if $s\in L_i$ for some $i\in\{1,\ldots,p\}$, then $\ell_i\geq 1$,  and (c) $\ell_1+\dots+\ell_p\leq 2k^2$. Therefore, we consider the choice of $\ell_1,\ldots,\ell_p$ in one of the branches of our algorithm. Because the vertices of each $L_i$ are indistinguishable, we obtain that for our choice of $Y_1,\ldots,Y_p$, there is an  
 optimal broadcasting protocol $P$ such that the set of vertices $X$ that receive the message in the first $2k$ rounds is exactly $X'=S\cup(Y_1\cup\dots\cup Y_p)$. In particular, this means that $X'$ is feasible. Hence, the choice of $\ell_1,\ldots,\ell_p$ and $X'$ is not discarded by the algorithm. For each $v\in S$ and $i\in\{1,\ldots,p\}$, let $x_{vi}^*$ be the number of vertices of $L_i$ that receive the message from $v$ in the rounds from $2k+1$ to $t$ with respect to the protocol $P$. We obtain that these values of $x_{vi}^*$ satisfy (\ref{eq:one})--(\ref{eq:four}). Therefore, the system (\ref{eq:one})--(\ref{eq:four}) is feasible and our algorithm returns a correct yes-answer. 
 
 For the opposite direction, assume that there is a choice of  nonnegative integers $\ell_1,\ldots,\ell_p$ satisfying (a)--(c) such that for the corresponding choice of $Y_1,\ldots,Y_p$, 
 $X=S\cup(Y_1,\ldots,Y_p)$ is feasible, such that  the system (\ref{eq:one})--(\ref{eq:four})  has a solution. Suppose that  the values  $x_{vi}^*$ for $v\in S$ and $i\in\{1,\ldots,p\}$ give a solution of the system. We have that there is a broadcasting protocol $P$ that ensures that the vertices of $X$ receive the message in the first $2k$ rounds. We extend the protocol by defining that exactly $x_{vi}^*$ vertices of $L_i$ receive the message from each $v\in S$. Because of  (\ref{eq:one})--(\ref{eq:four}), we obtain that all the vertices get the message in $t$ rounds. Thus, $(G,s,t)$ is a yes-instance. This completes the proof of correctness.
 
To evaluate the running time, notice that we consider $2^{\Oh(k^3)}$ choices of   $\ell_1,\ldots,\ell_p$. For each choice, we either discard it in $2^{\Oh(k^2)}$ time or solve the system  
(\ref{eq:one})--(\ref{eq:four}) in $2^{\Oh(k2^k)}\cdot n^{\Oh(1)}$. This implies that the total running time is  $2^{\Oh(k2^k)}\cdot n^{\Oh(1)}$. This completes the proof of Theorem~\ref{thm:vc}.

\section{Kernelization for the parameterization by $k=n-t$}\label{sec:kernel}
In this section, we prove Theorem~\ref{thm:kernel}. Recall that we parameterize  \probTB by $k=n-t$. Hence, it is convenient for us to denote the considered instances as triples $(G,s,k)$ throughout the section instead of $(G,s,n-k)$. 

Let $(G,s,k)$ be an instance of \probTB. 
We exhaustively apply the following reduction rules in the order in which they are stated. The first rule is straightforward because $b(G,s)\leq n-1$ and $(G,s,k)$ is a yes-instance if $k\leq 1$.  
Also if $k>n$, then $(G,s,k)$ is a no-instance.

\begin{reduction}\label{rule:trivial}
If $k\leq 1$, then return a trivial yes-instance, e.g., the instance with $G=(\{s\},\emptyset)$ and $k=0$, and stop. If $k>n$, then return a trivial no-instance, e.g., the instance with  $G=(\{s,v\},\{sv\})$ and $k=1$, and stop.
\end{reduction}

Observe that after applying Reduction Rule~\ref{rule:trivial}, $n\geq 2$, because if $n=1$, then either $k\leq 1$ of $k>n$ and we would stop.
Notice that if $d_G(s)=1$, then the source $s$ sends the message to its unique neighbor in the first round. This allows us to delete $s$ and define a new source using the following rule whose safeness is straightforward. 

\begin{reduction}\label{rule:reroot}
If $d_G(s)=1$, then let $v$ be the neighbor of $s$, set $G:=G-s$ and define $s:=v$.
\end{reduction}

Now we can assume that $d_G(s)\geq 2$. By the next rule, we delete certain pendent vertices. 

\begin{reduction}\label{rule:pend}
If there is a vertex $v\in V(G)$ such that for the set of vertices of degree one $W\subseteq N_G(v)$, it holds that $|W|\geq |V(G)\setminus W|$, then select an arbitrary $w\in W$ and 
set $G:=G-w$. 
\end{reduction}

\begin{claim}\label{cl:pend}
Reduction Rule~\ref{rule:pend} is safe.
\end{claim}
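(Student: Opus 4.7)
The plan is to prove both directions of the safeness. For the backward direction, given a broadcasting protocol for $G - w$ with completion at most $(n-1) - k$, extend it by one round in which $v$ (informed by the end, with no further pending transmissions) sends to $w$. This yields a protocol for $G$ with completion at most $n - k$, so $(G, s, k)$ is a yes-instance.

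For the forward direction, suppose $b(G, s) \leq t := n - k$ and let $P$ be an optimal greedy protocol with spanning tree $T$. By a standard rearrangement that preserves optimality, I may assume $v$'s children in $T$ are ordered with non-pendant children first (sorted by decreasing subtree broadcast time $\tau_i$), followed by the pendants in $W$ with $w$ placed last; this is safe because pendants are leaves of $T$ and do not forward. Let $r_v$ denote the round $v$ is informed, $D$ the number of $v$'s children in $T$, $m' := D - |W|$ the number of non-pendant children, and $s_i$ the size of the subtree at the $i$-th non-pendant child. Removing $w$ from $P$ yields a protocol $P'$ for $G - w$ whose completion equals $\max(r_v + D - 1,\, R_1,\, R_2)$, where $R_1 := \max_{i \leq m'}(r_v + i + \tau_i)$ and $R_2$ is the maximum informing time over vertices outside $v$'s subtree. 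One has $r_v + D - 1 \leq t - 1$ since $r_v + D \leq t$. For $R_2$: the subtree $T'$ of $T$ consisting of the outside vertices together with $v$ (viewed as a leaf of $T'$) has at most $n - |W|$ vertices, so greedy broadcast within $T'$ completes in at most $n - |W| - 1 \leq |W| - 1$ rounds (using $|W| \geq n - |W|$), which is at most $t - 1$ since $t \geq r_v + |W| \geq |W|$.

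The main obstacle is bounding $R_1 \leq t - 1$, which uses a counting argument exploiting the hypothesis $|W| \geq n - |W|$. Suppose for contradiction that $R_1 = t$, attained at position $j^* \geq 1$ with $r_v + j^* + \tau_{j^*} = t$. By the decreasing-$\tau$ ordering, $\tau_i \geq \tau_{j^*}$ for all $i \leq j^*$, hence $s_i \geq \tau_{j^*} + 1 = t - r_v - j^* + 1$. Combining $r_v + D \leq t$ with $D = |W| + m' \geq |W| + j^*$ gives $t - r_v - j^* \geq |W|$, so $s_i \geq |W| + 1$. Summing, $\sum_{i=1}^{j^*} s_i \geq j^*(|W| + 1) \geq |W| + 1$. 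But $\sum_{i=1}^{m'} s_i$ equals the number of non-pendant descendants of $v$, which is at most $n - 1 - |W| \leq |W| - 1$ by the hypothesis $|W| \geq n - |W|$, a contradiction. Therefore $R_1 \leq t - 1$, so $P'$ completes in at most $t - 1$ rounds, proving $b(G - w, s) \leq (n - 1) - k$.
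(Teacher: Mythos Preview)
Your argument is correct but takes a noticeably longer route than the paper's. The paper simply observes that, after pushing the pendants $W$ to the end of $C(v)$ with $w$ last, the protocol restricted to $G-W$ is a tree broadcast on $|V(G)\setminus W|$ vertices and hence finishes in at most $|V(G)\setminus W|-1\le |W|-1$ rounds, while $w$ is informed at round $r_v+|C(v)|\ge |W|$. Thus $w$ is the \emph{unique} vertex informed in the final round, so deleting it drops $b(G,s)$ by exactly one. This yields $b(G-w,s)=b(G,s)-1$ in a single stroke.

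Your decomposition into $r_v+D-1$, $R_1$, and $R_2$ is sound, and your contradiction argument for $R_1\le t-1$ via the subtree-size count $\sum s_i\le n-1-|W|\le |W|-1$ versus $\sum_{i\le j^*} s_i\ge |W|+1$ works cleanly; note in passing that your $\tau_i$ should be understood as the broadcast time of $T_i$ under the given protocol (not the optimum), since then $s_i\ge\tau_i+1$ follows from the fact that a tree protocol informs at least one new vertex per round. The $R_2$ bound likewise relies on the restricted protocol on $T'$ being a tree protocol (so completion $\le |V(T')|-1$), which you state but do not justify; this is routine but worth a sentence. What your finer decomposition buys is an explicit accounting of where the slack lies, but at the cost of considerably more bookkeeping than the paper's two-line comparison of $|W|$ against $|V(G)\setminus W|$.
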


\begin{proof}[Proof of Claim~\ref{cl:pend}]
Let $v\in V(G)$ and assume the set of vertices $W$ of degree one in the neighborhood of $v$ satisfies the condition $|W|\geq|V(G)\setminus W|$. Let also $w\in W$ and  $G'=G-w$. 
Notice that because $d_G(s)\geq 2$, $w\neq s$. Since $d_G(w)=1$, $w$ receives the message from $v$ and  $w$ does not send the message anywhere.  This implies that 
$b(G,s)-1 \leq b(G',s)\leq b(G,s)$. To show the claim, it is sufficient to prove that $b(G',s)=b(G,s)-1$. 

Consider an optimal broadcasting protocol $(T,\{C(x)\mid x\in V(T)\})$ for $G$. Because the vertices of $W$ are adjacent to $v$ and have degree one, the vertices of  $W$ are children of $v$ in $T$.  
Since each vertex $x\in W$ has no children in $T$, we can assume without loss of generality that the vertices of $W$ are the last vertices in $C(v)$ and, furthermore, $w$ is the last vertex in this ordered set. Then the vertices of $V(G)\setminus W$ get the message in at most $|V(G)|-|W|-1$ rounds. Because $|V(G)\setminus W|\leq |W|$, we obtain that $w$ gets the message in the last round. Moreover,  $w$ is a unique vertex that gets the message in the last round. This implies that $(T',\{C'(x)\mid x\in V(T')\})$, where $T'=T-w$, $C'(x)=C(x)$ for $x\in V(T')\setminus \{v\}$ and $C'(v)=C(v)\setminus\{w\}$, is a broadcasting protocol for $G'$ that ensures that every vertex gets the message in $b(G,s)-1$ rounds. Thus $b(G',s)=b(G,s)-1$. 
\end{proof}

To state the following rule, we introduce an auxiliary notation. For a vertex $v$ of a graph $H$, we define $\rho_H(v)=\max\{\dist_H(v,u)\mid u\in V(H)\}$. 

\begin{reduction}\label{rule:bridge}
If $G$ has a bridge $e=uv$ such that $G-e$ has two connected components $G_1$ and $G_2$, where $s,u\in V(G_1)$, $v\in V(G_2)$, $d_G(u)=2$, and $|V(G_1)|< \dist_{G_1}(s,u)+\rho_{G_2}(v)$, then set $G:=G/e$. 
\end{reduction}

\begin{claim}\label{cl:bridge}
Reduction Rule~\ref{rule:bridge} is safe.
\end{claim}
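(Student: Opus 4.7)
The claim to establish is $b(G,s) = b(G',s) + 1$; this implies safeness since $|V(G')| = |V(G)| - 1$, so under the parameterisation $k = n - t$, decreasing both $n$ and $t$ by one preserves $k$, and hence $b(G,s) \le t$ iff $b(G',s) \le t - 1$. Write $\ell = \dist_{G_1}(s,u)$ and $r = \rho_{G_2}(v)$, so the hypothesis reads $|V(G_1)| < \ell + r$. Because $d_G(u) = 2$ and one neighbour of $u$ is $v$, the other neighbour $u^*$ lies in $V(G_1)$; in particular $u$ is a leaf of $G_1$, and $u^*$ is the predecessor of $u$ on every shortest $s$-to-$u$ path in $G_1$.

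For the upper bound $b(G,s) \le b(G',s) + 1$, take an optimal broadcasting protocol $P' = (T', \{C'(x)\})$ for $G'$ and \emph{uncontract} it into a protocol $P$ for $G$. Since the only neighbour of $u'$ on the $G_1$-side is $u^*$, any simple path in $G'$ from $s$ to $u'$ ends with the edge $u^* u'$, so $u^*$ is $u'$'s parent in $T'$. Replace $u'$ in $T'$ by the length-two path $u^* \to u \to v$, set $C(u) := (v)$ and $C(v) := C'(u')$ so that $v$ inherits $u'$'s children and their ordering, and keep all other $C$'s unchanged. A short induction down $v$'s subtree shows that every vertex of $V(G_2)$ is informed exactly one round later in $P$ than in $P'$, while every vertex of $V(G_1) \setminus \{u\}$ is informed at the same round; hence $P$ completes in at most $b(G',s) + 1$ rounds.

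For the lower bound $b(G,s) \ge b(G',s) + 1$, I will establish the two separate inequalities $b(G,s) \ge \ell + 1 + b(G_2,v)$ and $b(G',s) \le \ell + b(G_2,v)$. The first is a pure distance/subtree argument: $\dist_G(s,v) = \ell + 1$ because every $s$-$v$ path in $G$ must traverse the bridge $uv$, so $v$ is informed at time at least $\ell + 1$; and every vertex of $V(G_2)$ is a descendant of $v$ in any spanning tree of $G$ rooted at $s$, so the induced sub-protocol on $V(G_2)$ is a broadcasting of $G_2$ from $v$ requiring at least $b(G_2,v)$ additional rounds. For the second, I construct a protocol for $G'$ that (i) prioritises the shortest $s$-to-$u'$ path $s = w_0, w_1, \dots, w_\ell = u'$ so that $u'$ is informed at round exactly $\ell$; (ii) has $u'$ then run an optimal broadcasting on the $G_2$-side, which under $u' \leftrightarrow v$ is isomorphic to $G_2$ with source $v$, completing the $G_2$-side by round $\ell + b(G_2,v)$; and (iii) in parallel, greedily broadcasts $V(G_1) \setminus \{u\}$ using the idle sending slots of the path-vertices. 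Since $u$ is a leaf of $G_1$, $G_1 - u$ is connected, and a greedy tree-based schedule informs at least one new vertex per round, so the parallel $G_1$-broadcast finishes by round $|V(G_1)| - 1$. The hypothesis $|V(G_1)| < \ell + r \le \ell + b(G_2,v)$ then yields $|V(G_1)| - 1 < \ell + b(G_2,v)$, so the combined protocol terminates by round $\ell + b(G_2,v)$, as required.

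The principal obstacle lies in step (iii): one must ensure that $u'$ dedicates all of its sending slots to the $G_2$-side in the lower-bound protocol, that is, $u'$ has no $G_1$-side children in the chosen spanning tree. This is precisely the content of the parent argument reused from the upper bound, namely that $u^*$ is the unique $G_1$-neighbour of $u'$ and hence can always be taken as $u'$'s parent in any spanning tree of $G'$ rooted at $s$. Combining the upper and lower bounds yields $b(G,s) = b(G',s) + 1$, which establishes safeness.
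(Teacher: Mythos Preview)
Your proof is correct and reaches the same conclusion $b(G,s)=b(G',s)+1$, but via a genuinely different route from the paper. The paper argues both inequalities by transforming protocols: starting from an optimal protocol for $G$, it observes that every vertex of $G_1$ is informed within $|V(G_1)|$ rounds while some vertex of $G_2$ needs at least $\dist_{G_1}(s,u)+\rho_{G_2}(v)>|V(G_1)|$ rounds, so the last-informed vertex lies in $G_2$; contracting $e$ then shifts every $G_2$-vertex one round earlier while leaving $G_1$ unchanged, giving $b(G',s)\le b(G,s)-1$ directly. You instead sandwich both $b(G,s)$ and $b(G',s)$ against the explicit quantity $\ell+b(G_2,v)$: the lower bound $b(G,s)\ge \ell+1+b(G_2,v)$ is a clean distance/subtree argument, and the upper bound $b(G',s)\le \ell+b(G_2,v)$ comes from an explicit protocol construction. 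Your approach yields a concrete value for the broadcast time of $G'$ (under the rule's hypothesis) rather than just a relative statement, at the cost of a longer argument; the paper's contraction is shorter and avoids constructing any protocol from scratch.

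One small imprecision in your step (iii): the phrase ``informs at least one new vertex per round'' is not literally true for the $G_1$-side, since at round $\ell$ the vertex $u^*=w_{\ell-1}$ sends to $u'$ rather than to a $G_1$-vertex, and it may be the only informed $G_1$-vertex with an uninformed $G_1$-child at that moment. This costs at most one round, and your stated bound $|V(G_1)|-1$ correctly absorbs that loss (the $G_1$-side has only $|V(G_1)|-1$ vertices), so the conclusion stands; but the justification should mention this single exceptional round explicitly.
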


\begin{proof}[Proof of Claim~\ref{cl:bridge}]
Let $e=uv$ be a bridge of $G$ such that $G-e$ has two connected components  $G_1$ and $G_2$, where $s,u\in V(G_1)$, $v\in V(G_2)$, and $|V(G_1)|< \dist_{G_1}(s,u)+\rho_{G_2}(v)$. Let also $G'=G/e$. 
Because to reach the vertices of $G_2$, $u$ should send the message to $v$, we have that $b(G,s)-1 \leq b(G',s)\leq b(G,s)$. To prove the claim, we show that $b(G',s)=b(G,s)-1$. 

Let $(T,\{C(x)\mid x\in V(T)\})$ be an optimum broadcasting protocol for $G$. Because $e$ is a bridge, $v$ is a child of $u$ in $T$. Also for every $x\in V(G_2)$, $C(x)$ contains only vertices of $G_2$.
Because $u$ is a unique vertex in $V(G_1)$ which sends the message outside $V(G_1)$ and does it only once, we have that every vertex of $V(G_1)$ receives the message in at most $V(G_1)$ rounds. On the other side, we can observe that
$V(G_2)$ has a vertex that  gets the message only in at least $\dist_{G_1}(s,u)+\rho_{G_2}(v)$ rounds. Therefore, the vertices that receive the message in the last round are in $G_2$.
We define the protocol $(T',\{C'(x)\mid x\in V(T')\})$ for $G'$ as follows. We set $T'=T/e$, and for every $x\in V(T)\setminus \{u,v\}$, we define $C'(x)=C(x)$. Let $w$ be the vertex obtained from $u$ and $v$ by the contraction of $e$. To construct $C'(w)$, note that  $C(u)=\{v\}$, because $d_G(u)=2$. We define $C'(w)=C(v)$. 
Observe that by the protocol for $G'$, each vertex  $x\in V(G_2)$ gets the message in one round earlier than in the protocol for $G$. We conclude that each vertex of $G$ receives the message in at most $b(G,s)-1$ rounds. This means that $b(G',s)=b(G,s)-1$. 
\end{proof}

From now, we can assume that Reduction Rules~\ref{rule:trivial}--\ref{rule:bridge} are not applicable. We run the standard breadth-first search (BFS) algorithm on $G$ from $s$ (see, e.g.,~\cite{CormenLRS09} for the description). The algorithm produces a spanning tree $B$ of $G$ of shortest paths and  the partition of $V(G)$ into \emph{BFS-levels} $L_0,\ldots,L_r$, where $L_i$ is the set of vertices at distance $i$ from $s$ for every $i\in\{1,\ldots,r\}$.

We apply the following rule whose safeness immediately follows from Observation~\ref{obs:span}  and Lemma~\ref{lem:trees}.

\begin{reduction}\label{rule:bfs}
Compute $b(B,s)$ and if $b(B,s)\leq n-k$, then return a trivial yes-instance and stop.
\end{reduction}

Then we apply the final rule.

\begin{reduction}\label{rule:final}
If there is $v\in L_i$ for some $i\in \{0,\ldots,r-1\}$ such that for $X=N_G(v)\cap L_{i+1}$ and for the $(s,v)$-path $P$ in $B$, it holds that 
(i) $|X|\geq 2k+1$ and  
(ii) the total number of vertices in nontrivial, i.e., having at least two vertices, connected components of $G-V(P)$ containing vertices of $X$ is at least $4k-2$,
then return a trivial yes-instance and stop.
\end{reduction}

\begin{claim}\label{cl:final}
Reduction Rule~\ref{rule:final} is safe.
\end{claim}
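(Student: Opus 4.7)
The plan is to show that, under hypotheses (i) and (ii), the instance $(G,s,k)$ admits a broadcasting protocol from $s$ that finishes within $n-k$ rounds; this makes $(G,s,k)$ a yes-instance of \probTB and therefore justifies the output of the rule. The main idea is to use $v$ as a parallel dispatcher that exploits the rich neighborhood $X$ in $L_{i+1}$.

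Concretely, I would build a tree subgraph $T \subseteq G$ as the union of: (a) the BFS path $P = s = u_0, u_1, \ldots, u_i = v$; (b) the edge $vx$ for every $x \in X''$, where $X''$ denotes the set of $X$-vertices forming trivial single-vertex components of $G - V(P)$; and (c) for each non-trivial component $C_j$ of $G - V(P)$ containing a vertex of $X$, a carefully chosen spanning tree of $G[C_j \cup \{v\}]$ rooted at $v$, where $v$ is given as many children in $X_j := X \cap C_j$ as the structure of $G[C_j]$ permits. Writing $M \geq 4k-2$ for the total size of the non-trivial components, we have $|V(T)| = (i+1) + |X''| + M$, and Observation~\ref{obs:sub} gives $b(G,s) \leq b(T,s) + (n - |V(T)|)$, so it suffices to prove $b(T,s) \leq |V(T)| - k = (i+1) + |X''| + M - k$.

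Since the message travels along $P$ in exactly $i$ rounds to reach $v$, this further reduces to bounding, by $|X''| + M + 1 - k$, the broadcast time of the part of $T$ dangling below $v$. Using the classical optimal ordering (children of $v$ served in decreasing order of the broadcast time of their subtree), this time equals $\max_{\ell}(\ell + \beta_\ell)$, where $\beta_\ell$ is the broadcast time of the $\ell$-th child's subtree. The $|X''|$ pendants contribute $\beta = 0$ and are served last; for each non-trivial $C_j$, the chosen spanning tree of $G[C_j \cup \{v\}]$ makes $v$ have several children whose subtrees partition $C_j$, producing moderate $\beta$-values rather than the single large value $c_j - 1$ that a naive single-root spanning tree of $C_j$ would give.

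The step I expect to be the main technical hurdle is proving $\max_{\ell}(\ell + \beta_\ell) \leq |X''| + M + 1 - k$ from hypotheses (i) and (ii). I plan a case distinction according to whether the mass $M$ is concentrated in a single \emph{fat} component or is spread over many small ones: in the concentrated case the large $|X_j| \geq 2k+1$ lets $v$ have many children inside $C_j$, so by a splitting argument the largest $\beta$ stays of order $c_j/|X_j|$, and the saving of $k$ rounds follows from condition (i); in the spread case, the components are numerous but each of size at least $2$, so the $\beta_\ell$'s remain small while the dispatch index $\ell$ is absorbed by the inequality $q \leq M/2 \leq M + 1 - k$ that follows from $M \geq 4k-2$. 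Combining with Observation~\ref{obs:sub} gives $b(G,s) \leq n-k$, so $(G,s,k)$ is a yes-instance of \probTB, which is exactly what Reduction Rule~\ref{rule:final} outputs, proving its safeness.
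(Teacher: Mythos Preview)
Your overall strategy (build a tree subgraph $T$ and invoke Observation~\ref{obs:sub}) is the paper's strategy, and your ``spread'' case is essentially the paper's Case~2. The gap is in your ``concentrated'' case.

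You write that in the concentrated case ``the large $|X_j|\geq 2k+1$ lets $v$ have many children inside $C_j$''. But hypothesis~(i) says only $|X|\geq 2k+1$; it gives you no lower bound on $|X_j|=|X\cap C_j|$ for any single component. A fat component $C_j$ may well have $|X_j|=1$: take $C_j$ to be a long path whose sole endpoint in $L_{i+1}$ is the unique neighbour of $v$ in $C_j$. Then $v$ has exactly one child in $C_j$, the subtree below that child is the whole path, and your claimed bound ``$\beta$ of order $c_j/|X_j|$'' degenerates to $\beta\approx c_j$, which can be far larger than $|X''|+M+1-k$. So the concentrated case, as written, does not go through.

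The paper's remedy is to \emph{not} try to swallow the whole fat component. If some component $G_1$ has at least $k+1$ vertices, pick any $u\in X\cap V(G_1)$ and carve out inside $G_1$ a subtree $F$ of \emph{exactly} $k{+}1$ vertices containing $u$. Now condition~(i) is used globally, not inside $G_1$: since $|X|\geq 2k+1$ and $|V(F)|=k+1$, there are at least $k$ vertices of $X$ outside $V(F)$; attach these $k$ vertices to $v$ as pendants. The resulting tree $T$ has $|V(T)|=|V(P)|+(k{+}1)+k$, and the protocol ``send first to $u$, then to the $k$ pendants'' gives $b(T,s)\leq |V(P)|+k$, so $|V(T)|-b(T,s)\geq k+1$. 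This is the missing idea: in the concentrated case you buy parallelism not from $X_j$ but from the rest of $X$, and you keep the deep branch artificially short by taking only $k{+}1$ vertices of the fat component.
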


\begin{proof}[Proof of Claim~\ref{cl:final}]
Let  $v\in L_i$ for some $i\in \{0,\ldots,r-1\}$ and let $P$ be the $(s,v)$-path in $B$. 
Denote by $G_1,\ldots,G_\ell$ the nontrivial connected components of $G-V(P)$ such that each of them contains at least one vertex of $X=N_G(v)\cap L_{i+1}$. 
We assume that $|V(G_1)|\geq \dots\geq |V(G_\ell)|\geq 2$.
Suppose that $|V(G_1)|+\cdots+|V(G_\ell)|\geq 4k-2$. To show that the rule is safe, we have to prove that $b(G,s)\leq n-k$. 
By Observation~\ref{obs:sub}, it suffices to show that there is a tree subgraph $T$ of $G$ containing $s$ with $b(T,s)\leq |V(T)|-k$.
We consider two cases.  

Assume first that $V(G_1)\geq k+1$. Let $u\in V(G_1)\cap X$. We find a tree subgraph $F$ in $G_1$ with exactly $k+1$ vertices such that $u\in V(F)$. Then to construct the tree $T$, we take $P$ and $F$ and make $v$ adjacent to $u$.  Because $|X|\geq 2k+1$, there are $k$ distinct vertices $u_1,\ldots,u_k\in X\setminus V(F)$. We include them in $T$ by making them adjacent to $v$. We define the broadcasting protocol  $(T,\{C(x)\mid x\in V(T)\})$ as follows. For each $x\in V(P)\setminus \{v\}$, $x$ has a unique child composing $C(x)$. We define 
$C(v)=(v,u_1,\ldots,u_k)$ and for each $x\in V(F)$, $C(x)$ is an arbitrary ordering of the children of $x$ in $F$. Because $V(F)=k+1$, we have that 
$b(T,s)\leq |V(P)|+|V(F)|-1\leq |V(T)|-k$. 

Suppose now that $k\geq |V(G_1)|\geq\dots\geq |V(G_\ell)|$. 
Let $T_i$ be a spanning tree of $G_i$ and let $u_i\in V(T_i)$ for each $i\in\{1,\ldots,\ell\}$. We construct $T$ from $P$ and $T_1,\ldots,T_\ell$ by making $v$ adjacent to $u_1,\ldots,u_\ell$.  
We define the broadcasting protocol  $(T,\{C(x)\mid x\in V(T)\})$ as follows. For each $x\in V(P)\setminus \{v\}$, $x$ has a unique child composing $C(x)$. We define 
$C(v)=(u_1,\ldots,u_\ell)$ and for each $i\in\{1,\ldots,\ell\}$ and $x\in V(T_i)$, $C(x)$ is an arbitrary ordering of the children of $x$ in $T_i$. 
Observe that
\begin{equation*}
b(T,s)\leq |V(P)|-1+\max\{|V(T_i)|+i-1\mid i\in\{1,\ldots,\ell\}\}. 
\end{equation*}
Since $|V(T_i)|\leq k$ for $i\in\{1,\ldots,k\}$, $b(T,s)\leq |V(P)|+k+\ell-2$. 
If $\ell\geq 2k$, then $|V(G_1)|+\cdots+|V(G_\ell)|\geq 2\ell$ and 
$|V(T)|=|V(P)|+|V(G_1)|+\cdots+|V(G_\ell)|\geq |V(P)|+\ell+2k$. Hence,
$|V(T)|-b(T,s)\geq k$. If $\ell\leq 2k$, then $b(T,s)\leq |V(P)|+3k-2$
and $|V(T)|-b(T,s)\geq (4k-2)-(3k-2)\geq k$.
This completes the proof.
\end{proof} 

The crucial property of the instance obtained by applying Reduction Rules~\ref{rule:trivial}--\ref{rule:final} is given in the following lemma.

\begin{lemma}\label{lem:bound}
Suppose that Reduction Rules~\ref{rule:trivial}--\ref{rule:final} are not applicable to $(G,s,k)$. Then $|V(G)|\leq 18k-12$.
\end{lemma}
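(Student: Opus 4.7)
My plan is to proceed by contradiction: assume $|V(G)| > 18k - 12$, and derive that one of the Reduction Rules~\ref{rule:trivial}--\ref{rule:final} must still apply. The analysis centers on the BFS tree $B$ with levels $L_0, \ldots, L_r$ and a longest root-to-leaf path $P^* = u_0 u_1 \cdots u_r$ from $s$ in $B$; write $P_i^*$ for its prefix $u_0 \cdots u_i$.

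The first step is to exploit non-applicability of Reduction Rule~\ref{rule:final} along $P^*$. For each $i$ with $r - i \geq 4k - 2$, the component of $G - V(P_i^*)$ containing $u_{i+1}$ has at least $r - i \geq 4k - 2$ vertices, since it contains the entire tail $u_{i+1}, \ldots, u_r$. Hence condition~(ii) of that rule is satisfied, so condition~(i) must fail: $|N_G(u_i) \cap L_{i+1}| \leq 2k$. In particular $u_i$ has at most $2k$ BFS-children for such $i$. An analogous argument applies to any vertex $v$ of $B$ whose BFS subtree contains at least $4k - 1$ vertices: the BFS subtree (minus $v$) lies in $G - V(P_v)$ and already supplies at least $4k - 2$ vertices to the nontrivial components containing $N_G(v) \cap L_{i+1}$, so again option~(a) must hold for $v$.

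Next I would bound the depth $r$ using the combined power of Rules~\ref{rule:pend}, \ref{rule:bridge}, and~\ref{rule:bfs}. Rule~\ref{rule:bfs} ensures $b(B, s) > n - k$, forcing $B$ to be close to path-like. Rule~\ref{rule:bridge} rules out a long degree-$2$ prefix of $P^*$ followed by a deep continuation: any such configuration admits a contractible bridge $e = uv$ with $|V(G_1)| < \dist_{G_1}(s,u) + \rho_{G_2}(v)$. Rule~\ref{rule:pend} limits degree-one neighbors at any single vertex. Together these constraints should pin $r$ down to $O(k)$. Finally, the vertex count off $P^*$ is controlled level-by-level by Rule~\ref{rule:final}: at each branching point, either there are at most $2k$ off-path next-level neighbors (option~(a)) or the nontrivial subtrees they live in have total size at most $4k - 3$ (option~(b)); summing over the levels of $P^*$ and combining with the depth bound should give $n \leq 18k - 12$.

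The main obstacle I anticipate is obtaining the tight constant rather than a much larger one. Each of the relevant rules contributes a linear-in-$k$ quantity, but naive aggregation risks a multiplicative blow-up. The delicate case is handling ``intermediate'' BFS subtrees --- not tiny enough to be pendant-like, not large enough to force option~(a) of Rule~\ref{rule:final} --- uniformly across all levels. I expect the proof to split $P^*$ into its deep portion (where option~(a) holds pointwise, bounding branching) and its shallow suffix (where option~(b) applies, bounding subtree mass), and to use Rule~\ref{rule:bridge} to rule out long narrow prefixes together with an amortized count of off-path subtrees throughout.
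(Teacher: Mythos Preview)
Your plan has a genuine structural gap: the level-by-level summation along the longest root-to-leaf path $P^*$ cannot yield a linear bound with the dichotomy you extract from Reduction Rule~\ref{rule:final}. At each vertex $u_i$ you get either option~(a) (at most $2k$ next-level neighbours) or option~(b) (nontrivial off-path mass at most $4k-3$), but option~(a) says nothing about the \emph{sizes} of the off-path subtrees. If option~(a) holds at many consecutive levels, the off-path subtrees can themselves branch, and nothing in your scheme prevents $\Theta(k)$ levels each contributing $\Theta(k)$ off-path vertices, giving $\Theta(k^2)$. Your proposed fix --- splitting $P^*$ into a ``deep portion'' and a ``shallow suffix'' and amortizing --- does not supply a mechanism to kill this quadratic term; Rule~\ref{rule:bfs} is the only tool that sees the global structure, and you never say how to invoke it to cap the total off-path mass.

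The paper's proof avoids this by \emph{not} working with the full-depth path. Instead it picks the unique vertex $v$ on the BFS tree with $|V(B_v)|\geq k+1$ but every child-subtree of size at most $k$, takes $P$ to be the $(s,v)$-path, and sets $Y=V(G)\setminus(V(P)\cup V(B_v))$. The decisive step is a direct broadcasting argument on $B$ (send along $P$ first, then broadcast into $B_v$ and into $Y$ in parallel) showing $|Y|\leq k-1$; this is exactly where Rule~\ref{rule:bfs} does its work, and it requires $B_v$ to still contain at least $k+1$ vertices --- which is why $P$ must stop at $v$ rather than go to a leaf as your $P^*$ does. After that, $|V(P)|\leq 4k-2$ follows from Rule~\ref{rule:bridge} via a pigeonhole on three consecutive singleton BFS levels, and $|V(B_v)|\leq 13k-9$ from Rules~\ref{rule:final} and~\ref{rule:pend}. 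The three pieces add to $18k-12$. The threshold choice of $v$ is the idea your plan is missing.
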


\begin{proof}
Recall that $B$ is a BFS-tree rooted in $s$ and $b(B,s)> n-k$ because of Reduction Rule~\ref{rule:bfs}. Recall also that $L_0,\ldots,L_r$ are the BFS-levels.
 For a vertex $v\in V(B)$, we use $B_v$ to denote the subtree of $B$ induced by the descendants of $v$ in $B$ including $v$ itself. If $|V(G)|\leq k+1$, then the claim holds. Assume that 
 $|V(G)|\geq k+1$.
 Because $|V(B)|=|V(G)|\geq k+1$, there is $v\in V(B)$ such that $|V(B_v)|\geq k+1$ but $|V(B_u)|\leq k$ for every child $u$ of $v$ in $B$. 
Let $P$ be the $(s,v)$-path in $B$. We define  $Y=V(G)\setminus (V(P)\cup V(B_v))$.  Clearly, $|V(G)|=|V(B)|=|Y|+|V(P)|+|V(B_v)|-1$. To prove the lemma, we show upper bounds for $|Y|$, $|V(P)|$, and $|V(B_v)|$. First, we show a bound for the size of $Y$.

\begin{claim}\label{cl:Y}
$|Y|\leq k-1$.
\end{claim}

\begin{proof}[Proof of Claim~\ref{cl:Y}]
For the sake of contradiction, assume that $|Y|\geq k$. 
Let $P=v_1\cdots v_p$, where $s=v_1$ and $v=v_p$. 
We consider the following broadcasting protocol $(B,\{C(x)\mid x\in V(B)\})$ for $B$. For every $i\in \{1,\ldots,p-1\}$, we construct $C(v_i)$ by making $v_{i+1}$ the first element of the ordered set and then append the other children of $v_i$ in $B$ in arbitrary order. For $x\in V(B)\setminus\{v_1,\ldots,v_{p-1}\}$, $C(x)$ is an arbitrary ordering of the children of $x$.  Notice that the vertices of $P$ are getting the message in the first $|V(P)|-1$ rounds. Then because $|Y|\geq k$ and each vertex $y\in Y$ is reachable in $B-V(B_v)$ from $V(P)\setminus \{v\}$ by a $(v_i,y)$-path for some $i\in\{1,\ldots,p\}$, at least $k$ vertices of $Y$ should receive the message in the first $|V(P)|+k-1$ rounds.  Similarly, because $|V(B_v)|\geq k+1$, at least $k$ vertices of $B_v-v$ should receive the message in the first $|V(P)|+k-1$ rounds. Hence, at least $|V(P)|+2k$ vertices of $B$ get the message in the first  $|V(P)|+k-1$ rounds. This implies that the total number of rounds for the protocol $(B,\{C(x)\mid x\in V(B)\})$ is at most $|V(B)|-k$. This contradicts the assumption that Reduction Rule~\ref{rule:bfs} is not applicable and proves the claim. 
\end{proof}

Next, we use Claim~\ref{cl:Y} to upper bound the number of vertices of $P$.

\begin{claim}\label{cl:P}
$|V(P)|\leq 4k-2$.
\end{claim}

\begin{proof}[Proof of Claim~\ref{cl:P}]
The proof is by contradiction. Assume that $|V(P)|\geq 4k-1$. 
 Note that $v\in L_p$ for $p\geq 4k-2$ and $r\geq p+1\geq 4k-1$. 
 By Claim~\ref{cl:Y}, $|Y|\leq k-1$. Notice that $d_G(s)\geq 2$, because Reduction Rule~\ref{rule:reroot} is not applicable.
 By the pigeonhole principle, there is $q\leq 3k-3$ such that $|L_{q-1}|=L_{q}|=|L_{q+1}|=1$, that is, $L_{q-1}=\{v_{q-1}\}$, $L_{q}=\{v_q\}$, and $L_{q+1}=\{v_{q+1}\}$. This implies that  $d_G(v_q)=2$ and $v_{q}v_{q+1}$ is a bridge of $G$.  Consider the connected components $G_1$ and $G_2$ of $G-v_{q}v_{q+1}$ and assume that $s,v_{q}\in V(G_1)$ and $v_{q+1}\in V(G_2)$.\ We have that $|V(G_1)|\leq q+1+|Y|\leq 4k-3$. It also holds that $\dist_{G_1}(s,v_{q})=q$ and $\rho_{G_2}(v_{q+1})=r-q-1$. 
 Hence, 
 $\dist_{G_1}(s,v_q)+\rho_{G_2}(v_{q+1})=r-1\geq 4k-2$. Thus, $|V(G_1)|\leq 4k-3< \dist_{G_1}(s,v_q)+\rho_{G_2}(v_{q+1})$ but this implies that Reduction Rule~\ref{rule:bridge} would be applicable for the bridge $v_qv_{q-1}$. This gives a contradiction that proves the claim.   
\end{proof}

Finally, we upper bound the size of $B_v$.

\begin{claim}\label{cl:Bv}
$|V(B_v)|\leq 13k-9$.
\end{claim}

\begin{proof}
Assume that $v\in L_p$ for some $p\in\{1,\ldots,r\}$ and denote by $Z\subseteq X=N_G(v)\cap L_{p+1}$ the set of vertices having of degree one. We consider two cases depending on the size of $X$. 

Suppose that $|X|\geq 2k+1$. Because of Reduction Rule~\ref{rule:final},  the total number of vertices in nontrivial, i.e., having at least two vertices, connected components of $G-V(P)$ containing vertices of $X$ is at most $4k-3$. Hence, $B_v-Z$ has at most $4k-2$ vertices. Because $|Y|\leq k-1$ and $|V(P)|\leq 4k-2$ by Claims~\ref{cl:Y} and \ref{cl:P}, respectively, 
$B-Z$ has at most $(k-1)+(4k-2)+(4k-2)-1=9k-6$ vertices, that is, $|V(G)\setminus Z|\leq 9k-6$. Because Reduction Rule~\ref{rule:pend} is not applicable, we have that 
$|Z|\leq |V(G)\setminus Z|-1\leq 9k-7$. We obtain that 
$|V(B_v)|=|V(B_v)\setminus Z|+|Z|\leq 13k-9$. 

Assume now that $|X|\leq 2k$. Suppose that $T_1,\ldots,T_\ell$ are the subtrees of $B_v$ rooted in the children $u_1,\ldots,u_\ell$ of $v$, respectively, such that $|V(T_i)|\geq 2$ for each $i\in\{1,\ldots,\ell\}$.  Notice that $\ell\leq |X|\leq 2k$ and $|V(T_i)|\leq k$ because of the choice of $v$. We claim that $|V(T_1)|+\dots+|V(T_\ell)|\leq 4k-2$.

The proof is by contradiction. Let $|V(T_1)|+\dots+|V(T_q)|\geq 4k-1$. Consider the tree $T$ constructed from $P$ and $T_1,\ldots,T_\ell$ by making $v$ adjacent to $u_1,\ldots,u_\ell$.  
We show that $b(T,s)\leq |V(T)|-k$. For this,  we define the broadcasting protocol  $(T,\{C(x)\mid x\in V(T)\})$ as follows. For each $x\in V(P)\setminus \{v\}$, $x$ has a unique child composing $C(x)$. We define 
$C(v)=(u_1,\ldots,u_\ell)$ and for each $i\in\{1,\ldots,\ell\}$ and $x\in V(T_i)$, $C(x)$ is an arbitrary ordering of the children of $x$ in $T_i$. 
We have that
\begin{equation*}
b(T,s)\leq |V(P)|-1+\max\{|V(T_i)|+i-1\mid i\in\{1,\ldots,\ell\}\}. 
\end{equation*}
Since $|V(T_i)|\leq k$ for $i\in\{1,\ldots,k\}$ and $\ell\leq 2k$, $b(T,s)\leq |V(P)|+k+\ell-2\leq |V(P)|+3k-1$.
Because $|V(T)|=|V(P)|+|V(T_1)|+\dots+|V(T_q)|\geq |V(P)|+4k-1$, we obtain that 
$b(T,s)\geq |V(T)|-k$. However, $T$ is a subgraph of the tree $B$ and $s\in V(T)$. Then by Observation~\ref{obs:sub},
 $b(B,s)\leq b(T,s)+|V(B)\setminus V(T)|\leq |V(B)|-k=n-k$ contradicting that Reduction Rule~\ref{rule:bfs} is not applicable. This proves that $|V(T_1)|+\dots+|V(T_\ell)|\leq 4k-2$.
 Because $|V(B_v)|= |X|-\ell+|V(T_1)|+\dots+|V(T_\ell)|+1$, we obtain that  $|V(B_v)|\leq |X|+|V(T_1)|+\dots+|V(T_\ell)|\leq 2k+4k-2\leq 6k-2$. 
This completes the proof of the claim.
\end{proof}

Summarizing the upper bounds from Claims~\ref{cl:Y}--\ref{cl:Bv}, we derive that 
\[|V(G)|=|Y|+|V(P)|+|V(B_v)|-1\leq (k-1)+(4k-2)+(13k-9)=18k-12.\] This concludes the proof.
\end{proof}

By Lemma~\ref{lem:bound}, if we do not stop during the exhaustive applications of Reduction Rules~\ref{rule:trivial}--\ref{rule:final}, then for the obtained instance $(G,s,k)$, $|V(G)|\leq 18k-12$. Hence, to complete the kernelization algorithm, we return $(G,s,k)$.

It is straightforward to see that Reduction Rules~\ref{rule:trivial}--\ref{rule:final} can be applied in polynomial time. In particular, BFS and finding bridges can be done in linear time by classical graph algorithms (see, e.g., the textbook~\cite{CormenLRS09}). Thus, the total running time of the kernelization algorithm is polynomial. This completes the proof of Theorem~\ref{thm:kernel}.

\section{Conclusion}\label{sec:concl}
In our paper, we initiated the study of \probTB from the parameterized complexity viewpoint. In this section, we discuss further directions of research. 

We observed that \probTB is trivially \classFPT when parameterized by $t$ and Theorem~\ref{thm:exact} implies that the problem can be solved  in $3^{2^t}\cdot n^{\Oh(1)}$ time. Is it possible to get a better running time for the parameterization by $t$?

In Theorem~\ref{thm:kernel}, we obtained a polynomial kernel for the parameterization by $k=n-t$, that is, for the parameterization below the trivial upper bound for $b(G,s)$. This naturally leads to the question about parameterization below some other bounds for this parameter. We note that the parameterization of \probTB above the natural lower bound $b(G,s)\geq \log n$ leads to a para-NP-complete problem.  
To see this, observe that for graphs with $n=2^t$ vertices, $b(G,s)\leq t$ if and only if $G$ has a binomial spanning tree rooted in $s$, and it is NP-complete to decide whether $G$ contains such a spanning tree~\cite{PapadimitriouY82}.

In Theorems~\ref{thm:cyc}  and \ref{thm:vc}, we considered structural parameterizations of \probTB
 by the cyclomatic and vertex cover numbers, respectively.  It is interesting to consider other structural parameterizations. In particular, is \probTB \classFPT when parameterized by the \emph{feedback vertex number} and \emph{treewidth} (we refer to~\cite{CyganFKLMPPS15} for the definitions)? For the parameterization by treewidth, the complexity status of \probTB  is open even for the case when the treewidth of the input graphs is at most two, that is, for series-parallel graphs.

\end{document}